\newcommand{\off}{\mathrm{OFF}}
\newcommand{\on}{\mathrm{ON}}
\newcommand{\rep}[1]{REPEAT(#1)}
\newcommand{\flip}[2]{FLIP(#1,#2)}
\newcommand{\osc}[3]{OSCILLATE(#1,#2)}
\newcommand{\seeC}[1]{SEE(#1)}
\newcommand{\dec}{DECLARE}
\newcommand{\done}{\mathrm{DONE}}
\newcommand{\upc}{\mathrm{UP}}
\newcommand{\nextc}{\mathrm{NEXT}}
\newcommand{\ready}{\mathrm{READY}}
\newcommand{\ceil}[1]{\left\lceil #1 \right\rceil}
\newcommand{\sch}{\Sigma}
\newtheorem{thm}{Theorem}[section]
\newtheorem{lemma}[thm]{Lemma}
\newtheorem{theorem}[thm]{Theorem}
\newtheorem{remark}[thm]{Remark}
\newcommand{\term}{\emph}
\newcommand{\tit}[1]{\textsc{#1}}
\newcommand{\com}{\ \ \ \ $\backslash\backslash$}
\title{Prisoners, Rooms, and Lightswitches}
\author{Daniel M. Kane}
\address{Department of Mathematics / Department of Computer Science and Engineering\newline\indent University of California, San Diego\newline\indent 9500 Gilman Drive \newline\indent La Jolla, CA 92093}
\email{dakane@ucsd.edu, aladkeenin@gmail.com}
\author{Scott Duke Kominers}
\address{Harvard Business School and Department of Economics\newline\indent Harvard University\newline \indent Rock Center 219,
Harvard Business School\newline\indent
Soldiers Field, Boston, MA 02163 }
\email{kominers@fas.harvard.edu, skominers@gmail.com}
\thanks{Kane gratefully acknowledges the support of an NSF Postdoctoral Fellowship, NSF CAREER Award 1553288, and a Sloan Research Fellowship. Kominers gratefully acknowledges the support of a Harvard Mathematics Department Highbridge Fellowship, an NSF Graduate Research Fellowship, National Science Foundation Grants CCF-1216095 and SES-1459912, an AMS-Simons Travel Grant, the Harvard Milton Fund, and the Ng Fund and the Mathematics in Economics Research Fund of the Harvard Center of Mathematical Sciences and Applications.}
\begin{document}

\begin{abstract}
We examine a new variant of the classic prisoners and lightswitches puzzle: A warden leads his $n$ prisoners in and out of $r$ rooms, one at a time, in some order, with each prisoner eventually visiting every room an arbitrarily large number of times. The rooms are indistinguishable, except that each one has $s$ lightswitches; the prisoners win their freedom if at some point a prisoner can correctly declare that each prisoner has been in every room at least once. \textit{What is the minimum number of switches per room, $s$, such that the prisoners can manage this?} We show that if the prisoners do not know the switches' starting configuration, then they have no chance of escape---but if the prisoners do know the starting configuration, then the minimum sufficient $s$ is surprisingly small. The analysis gives rise to a number of puzzling open questions, as well.
\end{abstract}

\maketitle

\section{Introduction}

The following puzzle is well-known: \begin{quote}There are $n$ prisoners in a prison.  The warden offers a deal:  He will lead the prisoners into a particular room one at a time in some order, with the  guarantee that each prisoner will eventually be led into the room arbitrarily many times.  At any point, a prisoner may declare that all the prisoners have been in the room.  If the declaring prisoner is correct, then the prisoners are freed.  Otherwise, they are executed(!).

The prisoners are allowed to confer ahead of time to agree upon a strategy, but are allowed no direct communication after the exercise starts.  The room that they are led into is completely featureless except for a lightswitch, which starts in the $\off$ position.  What lightswitch flipping strategy guarantees the prisoners' freedom?
\end{quote}

This ``One-Bulb Room'' problem appears in Winkler's \textit{Mathematical Puzzles}~\cite[p.~103]{Winkler};  Winkler remarks that it is also appeared in \textit{The Emissary}~\cite{Emiss} and as a puzzler on \textit{Car Talk}~\cite{NPR}. Dehaye, Ford, and Segerman~\cite{dehaye2003one} have studied a similar problem, in which the prisoners may synchronize their actions with a global clock.

Despite its popularity, however, the One-Bulb Room problem appears to have seen little generalization.  Indeed, the second author, Kominers, and Chen~\cite{HCMR} posed a generalization, inquiring about what happens when the number of rooms is increased to $r>1$.\footnote{The problem was later featured in one of the second author's \textit{Bloomberg Opinion} puzzle columns~\cite{BOP}; the solution presented there and in \cite{HCMR2} corresponds to the protocol we present in Section~\ref{sec:2s}, although our analysis here is far more formal.}  This query, in turn, has a number of variations. Some turn out to be surprisingly subtle---to whit the original solution of~\cite{HCMR} contained an error, which was spotted by the first author.  Discussions about a corrected version of the problem (\cite{HCMR2}) have led to this article, which rigorously investigates the circumstances under which the prisoners may win their freedom (and those in which they are doomed to failure).

\subsection{A Solution for $n$ Prisoners, One Room}\label{oneroomsec}

With only one room to track, the prisoners have a fairly simple escape strategy.  They select a \term{leader}, who will keep count of the number of prisoners who have entered the room.  The other prisoners signal that they have been in the room by turning the lightswitch $\on$ the first time they are able to do so; the leader acknowledges these signals by turning the switch $\off$ again.  Given that each prisoner only signals once, the leader will know that all the prisoners have been in the room once he has acknowledged $n-1$ signals.

Of course, if $n>1$, the leader has entered the room at least once by the time he has acknowledged $n-1$ signals; at that time he can declare immediately.  (When $n=1$, of course, the leader must wait until he has been in the room before declaring.)

\subsection{A Formal Framework}

Although the $n$ prisoner, one room solution just discussed is fairly straightforward, much of our later discussion will be far more complex.  Therefore, we introduce a notation for prisoners' solution protocols at the outset, using the $n$ prisoner, one room solution as an example.

We say that a room with $s$ switches is \emph{in configuration $(a_1,\ldots, a_s)$} if its switches display the values $a_1,\ldots, a_s$ (in sequence).  For example, a room with one switch has two possible configurations: $(\on)$ and $(\off)$. In general, it will be somewhat cumbersome to describe configurations as lists of switch values, so we often give names to configurations instead.

In the solution described above, all the prisoners who are not the leader follow a very simple algorithm: they wait until they see the lightswitch $\off$, and then turn it $\on$.  We notate this procedure as ``\flip{$(\off)$}{$(\on)$}.''  In general, we define \begin{quote}\flip{$A$}{$B$}: Wait until you see a room in configuration $A$ and then reconfigure it to $B$.\end{quote}

The leader, meanwhile, must apply a more complicated algorithm.  First, assuming $n>1$, the leader must turn the lights off $n-1$ times.  We could express this by writing ``\flip{$(\on)$}{$(\off)$}'' $n-1$ times, but this seems cumbersome. We instead write:
\begin{quote}
\rep{$n-1$}\newline
\indent \flip{$(\on)$}{$(\off)$},
\end{quote} where \rep{$k$} indicates that the prisoner should repeat the nested actions $k$ times. After the counting phase is completed, the leader must \term{declare}, announcing that everyone has entered the room; this operation is written ``\dec.''  The leader's complete algorithm in the solution to the one-room case is therefore: \begin{quote}
\rep{$n-1$}\newline
\indent \flip{$(\on)$}{$(\off)$}\newline
\dec.
\end{quote}

As we have already observed, if $n=1$, then the leader (who is the only player by default) must follow a slightly different algorithm.  He must wait until he enters a room, and then must \dec.  Equivalently, since the switch starts $\off$, he must wait until he sees the configuration $(\off)$.  We define the operation 
\seeC{$a$}, hich means that a prisoner waits (i.e., he progresses no further through his algorithm) until he enters a room that has configuration $a$.  The full solution to the one-room problem when $n=1$ is therefore:
\begin{quote}
\seeC{$(\off)$}\newline
\dec.
\end{quote}
(Although the \seeC{$a$} operation is equivalent to the ``trivial'' flipping operation \flip{$a$}{$a$}, it is useful to distinguish these two operations for clarity.)

As the solutions we discuss shall often require trivial modifications in the case $n=1$ (as occurs in the $n$ prisoner, one room problem), we will hereafter assume $n>1$ except where otherwise noted.

In order to add clarity to these protocols we will add comments at the end of some lines delineated by double backslashes:
\begin{quote}
\flip{$0$}{$1$} \com{ This is a comment.}
\end{quote}

\subsection{A Note on Starting Configuration}

Note that protocol we have just described assumes that the room is known to start with its switch in the $\off$ state.  If the room is known to start in the $\on$ state, an analogous protocol may be applied.  On the other hand, if the room begins in an unknown state, a slightly more complicated approach must be used.  In particular, the prisoners may use the following protocol:

\begin{quote}
\tit{Leader's Algorithm:}\newline
\rep{$2n-2$}\newline
\indent \flip{$(\on)$}{$(\off)$}\newline
\dec;
\end{quote}

\begin{quote}
\tit{Other Prisoners' Algorithm:}\newline
\rep{$2$}\newline
\indent \flip{$(\off)$}{$(\on)$}.
\end{quote}
The analysis of this protocol is similar to that of the simpler one for a known starting configuration.  The differences here are that each non-leader signals twice, and that, if the room stats in the $(\on)$ state, the leader will ``acknowledge'' an extra signal.  This causes him to declare before all other prisoners  have signalled twice---instead, he declares after all prisoners but one have signalled twice, and the remaining prisoner has signalled once.  Nonetheless,  whenever the leader declares, all prisoners will have entered the room at least once.

\subsection{$n$ Prisoners, $r$ Rooms}

Having thus handled the problem as stated, we consider generalizations in which the prison has $r\geq 1$ rooms that the prisoners might be led into.  There are several slight variants of this generalization; we discuss them in ascending order of difficulty.

\subsubsection{Distinguishable Rooms}

If the different rooms are disguisable, then the prisoners can treat each room as a separate, parallel instance of the original problem.  More generally, if the rooms may be partitioned into classes of mutually indistinguishable rooms, then each class may be addressed separately (at least under the assumption that the same prisoner would have declared in each sub-instance).  Hence, to keep the problem interesting, we only examine the case in which the only features distinguishing  the rooms are the configurations of their lightswitches.

\subsubsection{Each Prisoner Visits at least One Room}

With multiple rooms, the warden might relax his requirements of the prisoners.  In particular, he might ask that the prisoners declare only after each has visited at least one room.  However, this problem can be solved using an algorithm similar to that used in the one-room case.  In particular, even for an unknown starting configuration, the following protocol wins the prisoners freedom:
\begin{quote}
\tit{Leader's Algorithm:}\newline
\rep{$(r+1)(n-1)$}\newline
\indent \flip{$(\on)$}{$(\off)$}\newline
\dec;
\end{quote}
\begin{quote}
\tit{Other Prisoners' Algorithm:}\newline
\rep{$r+1$}\newline
\indent \flip{$(\off)$}{$(\on)$}.
\end{quote}

\subsubsection{Each Prisoner Visits All Rooms}\label{validSec}

From the preceding analysis, we are narrowed to a case in which the warden can really make trouble for the prisoners:  In this case, we have $r$ rooms, distinguishable from each other only by the states of their lightswitches, and the warden requires that each prisoner must have visited every room before some prisoner declares.  In the interest of fairness, the warden must grant the prisoners the guarantee that, if they wait long enough, each prisoner will eventually be led into every room an arbitrarily large number of times; we say that a schedule of room visits is \term{valid} if it has this property. We are now left with the following question:
\begin{quote}
What is the minimum number of switches per room, $s$, so that the prisoners have a protocol ensuring that they can win their freedom under any valid schedule of room visits?
\end{quote}

\section{A Negative Result when the Starting Configuration is Unknown}
We begin by supposing that the rooms' starting configurations are unknown.  Unlike the one-room case, in which this difficulty can be circumvented with only a slight modification of the prisoners' algorithm, if there are $r>1$ rooms and their initial configurations are unknown, then \textit{the prisoners have no protocol that is guaranteed to work}. 

In order to prove this kind of impossibility result, it will be important to describe the adversarial strategy for the warden. We begin with the following Lemma:

\begin{lemma}\label{lem:s12}
Suppose that each room has a finite number, $s$, of switches. Fix a deterministic strategy for the prisoners. For that strategy, there is a starting configuration for the rooms and a pair of schedules, $\sch_1$ and $\sch_2$, having the following properties:
\begin{enumerate}
	\item\label{oneRoomProperty} Under $\sch_1$, the prisoners will only ever visit one of the rooms.
	\item Under $\sch_2$, each prisoner will visit each room infinitely often, i.e. the schedule is valid in the sense described in Section~\ref{validSec}.
	\item The schedules $\sch_1$ and  $\sch_2$ are indistinguishable from the prisoners' perspectives. In particular, if the prisoners execute their strategy for $\sch_1$ or execute their strategy for $\sch_2$ each prisoner will see the same sequence of room configurations in either case.
\end{enumerate}
\end{lemma}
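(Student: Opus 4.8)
The plan is to build both schedules around a \emph{single} underlying run of the prisoners' strategy, so that indistinguishability (property~3) holds essentially for free, and then to exploit the finiteness of the configuration space to turn that one run into a valid schedule spread over $r$ rooms. First I would fix the order in which the warden summons prisoners and the starting configurations: let the warden call prisoners in round-robin order $1,2,\dots,n,1,2,\dots$, and under $\sch_1$ send every one of these visits into a single room (room~$1$), begun in some fixed configuration $B$. Running the deterministic strategy against this single-room schedule produces a determined sequence of ``events,'' each recording the prisoner, the configuration it sees, and the configuration it leaves; write $g_0=B,g_1,g_2,\dots$ for the successive configurations of room~$1$. Since each prisoner's personal view is exactly its subsequence of seen-configurations, property~3 will be automatic provided $\sch_2$ reproduces this \emph{same} event sequence.

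The heart of the argument is producing $\sch_2$. The key observation is that a room has only finitely many configurations (there are $s$ switches, each with finitely many settings), so the configurations $g_0,g_n,g_{2n},\dots$ recorded at the \emph{starts} of successive round-robin cycles take only finitely many values; by pigeonhole some configuration $c^\ast$ occurs at infinitely many cycle-starts, say at cycles $m_0<m_1<\cdots$. I would cut the timeline at these instants into consecutive blocks, with block~$0$ covering cycles $0,\dots,m_0-1$ and block~$k\ge 1$ covering cycles $m_{k-1},\dots,m_k-1$. Every such block begins, and (being delimited by the next cut) ends, with the room in configuration $c^\ast$ — except block~$0$, which begins at $B$ and ends at $c^\ast$ — and every nonempty block contains at least one full round-robin cycle, hence every prisoner. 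Now define $\sch_2$ by keeping the identical event sequence in the same temporal order but relabeling the room of each event according to its block: assign block~$0$ and block~$1$ to room~$1$ and distribute the remaining blocks round-robin among rooms $1,\dots,r$, starting room~$1$ in configuration $B$ and every other room in configuration $c^\ast$ (this common tuple is the shared starting configuration the lemma demands).

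It then remains to check that $\sch_2$ is internally consistent and valid. Because consecutive blocks landing on any one room meet exactly at configuration $c^\ast$ (and room~$1$'s first two blocks meet at $c^\ast$ as well), a short induction on time shows that under $\sch_2$ each room always presents precisely the configuration $g_t$ that the corresponding event saw under $\sch_1$: within a block this is immediate, and at a block boundary the newly selected room is sitting at $c^\ast$, which is exactly the configuration $g_{m_k n}=c^\ast$ the event is supposed to see. Hence the prisoners take identical actions and reproduce the event sequence, giving property~3; each of the $r$ rooms receives infinitely many full-cycle blocks and so is visited infinitely often by every prisoner, giving the validity in property~2; and $\sch_1$ touches only room~$1$, giving property~1. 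The main obstacle is precisely this reconciliation of validity with indistinguishability: naive attempts to ``mirror'' a prisoner's activity across several rooms corrupt its observation sequence, and what rescues the construction is that we need only the \emph{room configuration} — not the prisoners' possibly unbounded internal memory — to recur, which finiteness of the switch space supplies via pigeonhole.
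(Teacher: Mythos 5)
Your proposal is correct and is essentially the paper's own argument: both pigeonhole on the room configuration at round-robin cycle boundaries of the single-room run to obtain a recurrent configuration (your $c^\ast$, the paper's $D$), start all other rooms in that configuration, and build $\sch_2$ by handing the run off to a fresh room exactly at those recurrence times, with indistinguishability proved by induction on visits and validity following because the hand-offs occur infinitely often. The only difference is presentational: you cut the precomputed run into blocks offline and relabel rooms, while the paper has the warden advance a ``current room'' pointer online whenever a cycle ends in configuration $D$.
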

This will be enough since if the warden leads the prisoners through these schedules, as they are indistinguishable, to the prisoners the prisoners must either eventually declare on both or never declare on both. But this would cause them to either declare incorrectly in $\sch_1$ or never declare on the valid schedule $\sch_2$.
\begin{proof}
In order to produce these schedules $\sch_1$ and $\sch_2$, we start with some given room configuration, say $C$.  Consider a room in configuration $C$. Fix an ordering of the prisoners, and consider sending them into that room repeatedly in that order. After each pass through such a cycle, record the current configuration of the room in question. Since there are finitely many configurations, some configuration, $D$, must show up infinitely often.

We/the warden start/s with the rooms configured so that one room is in configuration $C$ and the rest are in configuration $D$. In schedule $\sch_1$, we send all the prisoners through the first room in the specified order repeatedly. It is clear that this satisfies our Property~\eqref{oneRoomProperty}.

To construct schedule $\sch_2$, we maintain an ordered list of the rooms and a  \textit{current room} indicator, initially set to indicate the first room. The warden repeatedly sends the prisoners in order into current room; however when at the end of any cycle if the current room is in configuration $D$, then he switches the current room indicator to the next room on the ordered list.

To prove that schedules $\sch_1$ and $\sch_2$ are indistinguishable, we note first that in both cases the order in which prisoners are chosen to enter rooms is the same. Furthermore, we claim that for each $k$, the state of the current room in step $k$ of $\sch_2$ is the same as the state of the first room at step $k$ of $\sch_1$. We prove this by induction on $k$. For $k=1$, the current room is in state $C$ in either case. If the current rooms were in the same state at each time leading up to $k$, then they will be the same at step $k$ because the $k$-{th} prisoner will in either schedule have the same history and will visit a room in the same state, and thus will make the same change to it. The one slight twist that needs to be added is that if we have completed a pass through the prisoners and the current room is in state $D$, the warden will then change the current room indicator in schedule $\sch_2$. However, this will necessarily change the current room from one room in state $D$ to another, and will not affect our claim.

We have left to prove that $\sch_2$ is valid. For this, we note that in $\sch_1$, by assumption, is it the case infinitely often that at the end of a {cycle through the prisoners}, that the warden finds the first room in configuration $D$. Therefore, by the indistinguishability result already proven, in schedule $\sch_2$ the warden will infinity often at the end of a cycle find the current room in configuration $D$. Therefore, in $\sch_2$, the current room indicator will change infinitely often. However, each time the current room indicator changes, each prisoner will visit the new current room at least once before the indicator changes again. Since the indicator changes infinitely often, and since the warden will change it in a sequence that cycles through all of the rooms infinitely often, each room will become the current room infinitely many times. Thus, the schedule will send each prisoner to each room infinitely many times.
\end{proof}

Given Lemma~\ref{lem:s12}, it is not hard to show our impossibility result. The idea is that the warden will send prisoners through one of the two  schedules constructed in the lemma, and either the prisoners will declare incorrectly in $\sch_1$ or fail to declare in $\sch_2$.
\begin{thm}\label{unknownStartThrm}
Assume $r>1$ rooms and that the number of switches per room, $s$, is finite.  Then for any deterministic strategy for the prisoners, there is an  set of initial room configurations and an associated valid schedule $\sch$ so that if the prisoners visit rooms according to $\sch$, they will either declare incorrectly or fail to declare.
\end{thm}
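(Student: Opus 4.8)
The plan is to leverage Lemma~\ref{lem:s12} almost directly. Fix an arbitrary deterministic strategy for the prisoners, and let the starting configuration together with the schedules $\sch_1$ and $\sch_2$ be those produced by the Lemma. The crucial consequence of the Lemma's indistinguishability clause is that each prisoner's entire sequence of observations is identical whether the warden runs $\sch_1$ or $\sch_2$; since the strategy is deterministic, each prisoner's actions---in particular, whether and when that prisoner declares---are a function of this observation sequence alone. First I would record this formally: because the prisoners are chosen in the same order under both schedules, there is a well-defined notion of ``step $k$'' (the $k$-th room visit), and a given prisoner declares at step $k$ under $\sch_1$ if and only if that same prisoner declares at step $k$ under $\sch_2$.

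I would then split into two cases according to the fixed strategy. In the first case, no prisoner ever declares under $\sch_2$. Since $\sch_2$ is valid by property~(2) of the Lemma, taking $\sch=\sch_2$ (with the Lemma's initial configuration) exhibits a valid schedule on which the prisoners fail to declare, and we are done. In the second case, some prisoner declares, and I take $k$ to be the first step at which this happens. By the observation above, the same declaration occurs at step $k$ under $\sch_1$; but under $\sch_1$ the prisoners only ever enter a single room (property~(1)), so at step $k$ no prisoner has entered any of the remaining $r-1\geq 1$ rooms, and hence the declaration---which asserts that every prisoner has visited every room---is incorrect.

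This handles the declaration but leaves the one genuine wrinkle, which I expect to be the main (if modest) obstacle: $\sch_1$ itself is not valid, so I cannot simply take $\sch=\sch_1$ and satisfy the theorem's requirement that $\sch$ be valid. To repair this, I would define $\sch$ to agree with $\sch_1$ for its first $k$ steps and thereafter to cycle all prisoners through all rooms forever. Such a tail guarantees that each prisoner visits each room infinitely often, so $\sch$ is valid; meanwhile its first $k$ steps coincide with $\sch_1$, so every prisoner's observation sequence through step $k$ is unchanged and the declaring prisoner still declares at step $k$. Since a declaration depends only on the declaring prisoner's history up to that moment, and since through step $k$ only the first room has been entered by anyone, the declaration is incorrect under $\sch$ as well. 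In either case we have produced an initial configuration and an associated valid schedule on which the prisoners either declare incorrectly or never declare, which is exactly the claim.

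The only points requiring care in the write-up are the bookkeeping that justifies the ``same step $k$'' correspondence (which follows from the common prisoner ordering plus determinism) and the verification that prepending a $k$-step prefix of $\sch_1$ to a valid tail neither disturbs the declaration nor spoils validity. Both are routine once the case split is in place.
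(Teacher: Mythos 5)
Your proposal is correct and takes essentially the same approach as the paper's proof: both run the strategy on $\sch_2$, use indistinguishability to transfer any declaration to $\sch_1$ (where it must be incorrect since only one room is ever visited), and repair the invalidity of $\sch_1$ by keeping its prefix up to the declaration step and appending a tail sending every prisoner to every room infinitely often. The only difference is presentational---you spell out the step-$k$ correspondence that the paper leaves implicit.
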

\begin{proof}
We use the initial room configurations specified by Lemma~\ref{lem:s12}. In order to find the schedule, we first consider what happens if the prisoners are sent into rooms according to schedule $\sch_2$. If they do not declare, we have a valid schedule for which the prisoners never declare, and we are done. If they do declare, then we note that by the indistinguishability property that the prisoners must also declare in schedule $\sch_1$. This declaration will necessarily be incorrect, as $\sch_1$ only involves visits to a single room (and $r>1$). However, $\sch_1$ is not a valid schedule. We construct a valid schedule $\sch_1$ by noting that the declaration under $\sch_1$ will occur after some finite number $\ell$ of visits. We thus pick a schedule $\sch_1'$ that agrees with $\sch_1$ for the first $\ell$ visits and after that sends all prisoners to all rooms infinitely many times in whatever order is desired. It is now clear that $\sch_1'$ is valid, however since it agrees with $\sch_1$ on the first $v$ steps, the prisoners will still declare on step $v$---before they have all visited all the rooms.
\end{proof}

\section{A Solution for When the Starting Configuration Is Known}

Given the impossibility result presented in the previous section, we henceforth focus on the case in which the rooms' starting configurations are known in advance.

\subsection{Arbitrary Starting Configuration}

We start with the general case, in which the starting configurations, while known, may be arbitrary.  Arbitrary starting configurations have the potential to make the prisoners' task difficult, since seeing a room in a given configuration could just mean that the room started in that configuration.

The prisoners would prefer to work in a simple, ``canonical'' starting configuration, for example the one in which where all of the switches start in the $\off$ position.  Fortunately, there is an approach that allows one to use a strategy that works for the all $\off$ starting configuration (satisfying some mild extra conditions) to produce a strategy that works for an arbitrary, known starting configurations.

\begin{lemma}\label{reductionlemma}
Suppose that, given $n$, $r$, and $s$, the prisoners have a winning protocol if all of the switches start in the $\off$ position. Suppose additionally that
\begin{enumerate}
\item the winning protocol  makes use of two room configurations, here denoted $0$ and $1$, where $0$ is the all-$\off$ configuration (and $1$ is some other configuration); and
\item   one of the prisoners is designated as the \emph{leader}, and all non-leader prisoners will ignore all rooms they are sent to until they see a room in a configuration other than $0$ or $1$.
\end{enumerate}
Then, the prisoners have a winning protocol for any (known) set of starting configurations.
\end{lemma}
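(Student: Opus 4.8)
The plan is to reduce the arbitrary-start problem to the all-$\off$ problem by prepending to the given all-$\off$ protocol (call it $P$) an \emph{initialization phase}, carried out by the leader, whose sole purpose is to drive every room into the all-$\off$ configuration $0$; once all rooms are in configuration $0$, the prisoners simply execute $P$. The observation that makes this promising is that a single room can be forced into configuration $0$ no matter what it currently displays---the leader just turns each of its $s$ switches $\off$---so the leader never needs to know \emph{which} physical room he is standing in, only that he has visited all of them. Consequently the real content of the initialization phase is not resetting but \emph{counting}: the leader must become certain that every one of the $r$ rooms has been reset, not merely those he happened to revisit.

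First I would implement this count with the help of a marker configuration together with the hypothesis that the starting configuration is known. The leader designates reset rooms by a distinguished configuration and regards a room as ``fresh'' exactly when it does not already bear the marker, incrementing a counter on each fresh room. The obstruction that a room might \emph{start} bearing the marker---so that the leader would wrongly skip it---is precisely where knowledge of the starting configuration is spent: the prisoners know in advance how many rooms begin in each configuration, so the leader knows the exact number of fresh rooms he must process before all $r$ rooms carry the marker, after which a second, symmetric pass turns every marked room into configuration $0$. Throughout both passes the non-leaders must make no progress through $P$, and this is exactly the leverage provided by condition~(2): so long as the leader presents the non-leaders only with configurations that leave them dormant, they take no protocol actions and cannot corrupt the count.

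The hard part, and the step I expect to be the main obstacle, is the \emph{synchronization} between the initialization phase and the start of $P$: one must guarantee both that no non-leader advances through $P$ before initialization is complete and that every non-leader does enter $P$ afterward. Two features of the adversarial setting make this delicate. First, a room whose initial configuration already lies outside $\{0,1\}$ could engage a non-leader before the leader has processed it, since the warden may route that non-leader into the room first; controlling this requires arranging, via condition~(2) and the known initial counts, that such a sighting leaves the non-leader inert until the leader's explicit signal. Second, the ``initialization complete'' signal must reach \emph{every} prisoner under a schedule the warden controls, so a single transient marker sighting is not enough; the plan is to have the leader maintain a controlled marker configuration (ignored by the others by condition~(2)) and, if needed, collect a round of acknowledgments---again using the known start to count to $n-1$---before releasing all rooms into configuration $0$ and handing off to $P$. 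Verifying that this hand-off triggers each non-leader exactly once, and only after every room is canonical, is where I expect the argument to demand the most care.
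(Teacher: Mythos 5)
Your high-level plan---prepend a leader-driven initialization phase that forces every room into configuration $0$, then hand off to the all-$\off$ protocol $P$---is the same reduction the paper performs, and you correctly identify the two real difficulties (premature activation of non-leaders, and reliability of the ``initialization complete'' signal). But the proposal breaks down at exactly those points, because it uses hypothesis~(2) backwards. Hypothesis~(2) says a non-leader running $P$ ignores rooms \emph{until} he sees a configuration outside $\{0,1\}$; such a sighting is what \emph{activates} him, after which his behavior is whatever $P$ dictates and is outside your control. So condition~(2) cannot keep non-leaders dormant during your two passes: the warden may route every non-leader through the rooms whose initial configurations lie outside $\{0,1\}$ before the leader ever reaches them, and each such visit wakes that prisoner into $P$ with the rooms uninitialized; the same happens with your marker configuration if it is not $0$ or $1$. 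Your phrase ``so long as the leader presents the non-leaders only with configurations that leave them dormant'' assumes the leader controls what the non-leaders see, but the warden controls the routing. The only repair is to give the non-leaders an explicitly designed pre-$P$ phase in the \emph{new} protocol whose commands simply never trigger on configurations outside $\{0,1\}$; ``ignoring weird rooms'' must be built into the new protocol, while condition~(2) is needed only afterwards, to guarantee that a non-leader who has finished his pre-$P$ phase and formally entered $P$ stays inert while all rooms remain in $\{0,1\}$.

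Once the non-leaders must act during initialization, your leader-only count is corrupted by their flips, and the ``round of acknowledgments \ldots count to $n-1$'' you defer is precisely the crux that needs a construction. The paper's mechanism is an interlocking count: each non-leader flips $0\to 1$ exactly $r_0+1$ times (where $r_0$ is the known number of rooms starting at $0$) and then enters $P$; the leader first flips each room outside $\{0,1\}$ to $1$ (that is, $r-r_0-r_1$ times) and then flips $1\to 0$ exactly $r-r_0+(n-1)\cdot(r_0+1)$ times before entering $P$. These numbers do all the synchronization work: a non-leader cannot complete $r_0+1$ flips of $0\to 1$ until someone manufactures a fresh $0$, which only the leader does and only after his first sweep, so completing the loop certifies that the weird rooms are gone; and since the number of rooms in configuration $0$ never exceeds $r$, the leader cannot complete his second loop unless all $(n-1)(r_0+1)$ non-leader flips have occurred, at which point every room is provably in configuration $0$ and every non-leader is dormant at the top of $P$. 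Note that no non-leader ever needs to ``know'' when it is safe to acknowledge---the arithmetic decides for him---which is exactly how the transient-signal problem you worried about is avoided. Without this (or an equivalent) counting scheme, your argument is incomplete at the step you yourself flagged as the main obstacle.
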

\begin{proof}
The idea of this protocol is to run the old winning protocol preceded by a protocol that puts all switches in the $\off$ position.  We suppose that of the $r$ rooms, it is known that $r_0$ start in configuration $0$ and $r_1$ start in configuration $1$.  In fact, we will not need to know the multiplicities of the other configurations.

The protocol starts with the leader changing the configurations of the $r-r_0-r_1$ rooms not in configuration $0$ or $1$ to configuration $1$.  We next need a way of informing the other prisoners that these rooms have been cleared out.  This is done by changing rooms between configurations $1$ and $0$. In particular, each non-leader will attempt to change rooms from configuration $0$ to configuration $1$ a total of $r_0+1$ times before starting on his old protocol.  This number is chosen so that the prisoner cannot possibly see that many rooms in configuration $0$ without someone changing rooms to configuration $0$.  In the meantime, the leader will change rooms from configuration $1$ to configuration $0$.  He will do this $r-r_0+(n-1)\cdot (r_0+1)$ times.  This ensures that each other prisoner has changed $r-r_0$ rooms from $0$ to $1$ and that all rooms are now in configuration $0$.  After this point we are ready to begin the old protocol.  To summarize, here are the prisoners' strategies---where we use a new directive \flip{$*$}{$1$} meaning ``wait until you see a room not in configuration $0$ or $1$ and change that room to configuration $1$.''
\begin{quote}
\tit{Leader's Protocol:}\newline
\rep{$r-r_0-r_1$}\newline
\indent \flip{$*$}{$1$}\newline
\rep{$r-r_0+(n-1)\cdot (r_0+1)$}\newline
\indent \flip{1}{0}\newline
RUN OLD PROTOCOL
\end{quote}

\begin{quote}
\tit{Other Prisoners' Protocol:}\newline
\rep{$r_0+1$}\newline
\indent \flip{$0$}{$1$}\newline
RUN OLD PROTOCOL
\end{quote}

In order to show that the preceding protocol works, we will need to verify that:
\begin{enumerate}
\item No prisoner begins to run their old protocol before the leader completes his first REPEAT loop.
\item Between the end of the leader's first REPEAT loop and when he begins to run his old protocol, all rooms are in configuration $0$ or $1$.
\item When the leader begins to run his old protocol, the other prisoners have all started to run their old protocols, but have ignored all room they have seen since they started doing so, and all rooms are in configuration $0$.
\item Eventually the leader will reach the RUN OLD PROTOCOL step.
\end{enumerate}
Once we have proven these statements we will be done, since statement (4) implies that eventually the leader begins to run his old protocol, statement (3) implies that at that time
\begin{itemize}
	\item all non-leader prisoners are acting as if they were at the start of their old protocol and \item all rooms are in state $0$. \end{itemize}Therefore, from that point in time, it is as if all prisoners were running the old protocol with the correct starting configuration. Since the warden must send each prisoner into each room arbitrarily many times from that  point (in order for the sequence of visits to be valid), the correctness of the old protocol implies the correctness of the new one.

Statement (1) holds because each time a prisoner FLIPs a $0$ to a $1$, the number of rooms in configuration $0$ decreases.  The only way this number can increase is either after the leader finishes his first REPEAT loop or after some other prisoner begins running their old protocol.  Since the number of starting $0$s is less than the number that must be changed, no non-leader can begin to run their old protocol until some room changes to configuration $0$.  Therefore, the first prisoner to begin to run their old protocol must have done so after the leader completed his first REPEAT loop.

For statement (2), we first note by (1) that until the leader completes the first REPEAT loop, no other prisoner begins their old protocol.  Therefore, until that time, the only way that the number of rooms not in state $0$ or $1$ changes is that the number {decreases by one} each time the leader executes his \flip{$*$}{$1$} command.  Therefore when the leader finishes his first REPEAT loop there are no such rooms remaining. Thus, at the end of the leader's first REPEAT loop, all rooms are in configuration $0$ or $1$. We note that between that time and the end of the leader's second REPEAT loop, the only way that a room can be put into a configuration other than $0$ or $1$ would be if another prisoner who has started to execute their old protocol does so. However, by assumption, non-leader prisoners who are executing their old protocols will not pay attention to any rooms (much less change their configurations) until they have seen one in a configuration other than $0$ or $1$. However, there is no way that a prisoner can be the first to do this as they will need to have first seen a room in a state other than $0$ or $1$ which must have been produced by some even earlier prisoner.

Statement (3) is proven by considering the number of rooms in configuration $0$.  This number {increases by one} when the leader runs a \flip{$1$}{$0$}, {decreases by one} when another prisoner runs a \flip{$0$}{$1$}.  Since statement (2) implies that none of the non-leaders have reconfigured rooms or executed any commands since the starting to run their old protocols, these are the only ways this number can change until the leader starts to run his old protocol.  The number of rooms in configuration $0$ starts at $r_0$.  In order for the leader to begin the old protocol, this number must increase $r-r_0+(n-1)\cdot (r_0+1)$ times. However since the number of rooms in configuration $0$ can never exceed $r$, this is only possible if it has decreased at least $(n-1)\cdot (r_0+1)$ times.  This many decreases can happen only if each of the other prisoners run their \flip{$0$}{$1$} the full $r_0+1$ times and begin running their old protocols.

Statement (4) is a liveness condition that can be proven by looking carefully at the analysis thus far.  First, we show that the leader will eventually finish his first REPEAT loop.  This is because he executes a \flip{$*$}{$1$} once whenever he enters any of the rooms that did not start as $0$ or $1$ for the first time.  Since there are $r-r_0-r_1$ of these, eventually he has visited all of them and completed the loop.  Next, we note that there will never be a time at which no prisoner can make progress on his REPEAT loop. Indeed, our analysis thus far shows that if all of the non-leaders have completed their REPEAT loops, there will be as many rooms in configuration $1$ as iterations  left in the leader's loop.  Therefore, if the leader enters the appropriate room, he will make progress through his protocol.  If both the leader and some non-leader have FLIPs to perform, then either there is a $1$ for the leader to FLIP to a $0$ or a $0$ for the non-leader to FLIP to a $1$.  As validity guarantees that every prisoner will be sent to every room as many times as we need, if the prisoners wait long enough, then eventually one of them will complete one of their FLIP commands---and this can only happen a bounded number of times before everyone starts to run their old protocols.
\end{proof}

\subsection{All Switches Start $\off$}

Given the reduction proven in Lemma \ref{reductionlemma} we henceforth focus most of our effort on the case in which all rooms start in a specific known configuration---in particular,  the case in which all switches start in the $\off$ position.

\subsubsection{Na\"ive Solutions}\label{naive solutions section}

Some simple solutions come to mind quickly if we allow potentially large values of $s$.  If we had infinitely many switches in each room, the prisoners could use them to encode (in English, translated by some means into binary) any information they want.  They could use this to record the complete history of each room---the sequence of visits by prisoners, the names of these prisoners, the rooms previously visited by these prisoners, the configurations of those previously visited rooms, and any proofs of the Riemann Hypothesis that they have discovered in the meantime.  Eventually the rooms will become distinguishable, based on the first time that some particular prisoner, say Alvin, visited a given room.  Once this has happened it is trivial for any prisoner to (eventually) verify from each room's history that every room has been visited by every prisoner.

To implement an analogous protocol with only finitely many switches, we note that it suffices to store in each room's configuration:
\begin{itemize}
\item[(a)] for each prisoner, $p$, whether or not $p$ has visited the room;
\item[(b)] a number distinguishing each room, such as the $k$ so that this was the $k$-{th} distinct room visited by Alvin.
\end{itemize}
We can encode (a) using $n$ switches per room (one for each prisoner); and can encode (b) with $r$ switches using a unary encoding.  Hence we only need $s=n+r$ in order for the prisoners to have a winning protocol.

On the other hand, using unary counters is somewhat inefficient.  The value of $k$ can be encoded in binary using $\ceil{\log_2(r+1)}$ switches.  Furthermore, if prisoners wait to indicate their visits to a room until after it has been assigned identifiers by Alvin, they only need to store the number of prisoners who have visited the room---and anyone can declare once he verifies that each room has been visited by all $n$ prisoners. Such a counter can be implemented easily with $\ceil{\log_2(n+1)}$ switches; hence, we only require $$s=\ceil{\log_2(r+1)}+\ceil{\log_2(n+1)}.$$

A slight optimization on the preceding protocol removes the need to identify the rooms, so long as the prisoners only track their presence in each room sequentially. In particular, if we identify $n+1$ distinguished configurations, denoted, $0,1,\ldots,n$, where $0$ is the all-$\off$ configuration, and assign each prisoner an identifier $i\in\{1,2,\ldots,n\}$, we can use the following protocol:

\begin{quote}
\tit{Prisoner $i$'s Algorithm ($i<n$):}\newline
\rep{$r$} \newline
\indent \flip{$i-1$}{$i$}
\end{quote}

\begin{quote}
\tit{Prisoner $n$'s Algorithm:}\newline
\rep{$r$} \newline
\indent \flip{$n-1$}{$n$}\newline
\dec.
\end{quote}
When the preceding protocol is followed, for a room to be in configuration $i$ it must have been visited sequentially by prisoners, $1,2,\ldots,$ and $i$.  The declaration will not be made until each room is in configuration $n$, in which case each room has been visited by every prisoner.  For there to be $n+1$ available configurations, there must be at least $\ceil{\log_2(n+1)}$ switches per room, and hence $$s=\ceil{\log_2(n+1)}$$ suffices.

\subsubsection{A Less Na\"ive Solution}The solutions just described store a large amount of data across the rooms; a lower-overhead solution attempts to run the original one-room protocol for each room sequentially.  A simple version of this protocol requires six distinct configurations which we will call $\off$ (the initial room configuration), $\done$, $0,1,0'$ and $1'$.  During the running of the protocol:
\begin{itemize}
\item Rooms in the  $\off$ configuration have not yet been modified.
\item Rooms in the $\done$ configuration have been visited by all prisoners and will not be modified again.
\item Configurations $0$ and $1$ are used to implement the one-room protocol discussed in Section~\ref{oneroomsec}.
\item Configurations $0'$ and $1'$ are used  to communicate to each prisoner that it is time to move on to the next room.
\end{itemize}
Formally we use:
\begin{quote}
\tit{Leader's Algorithm:}\newline
\rep{$r$}\newline
\indent \flip{$\off$}{$0$}\newline
\indent \rep{$n-1$}\newline
\indent\indent \flip{$1$}{$0$}\newline
\indent \flip{$0$}{$0'$}\newline
\indent \rep{$n-1$}\newline
\indent\indent \flip{$1'$}{$0'$}\newline
\indent \flip{$0'$}{$\done$}\newline
\dec.
\end{quote}

\begin{quote}
\tit{Other Prisoners' Algorithm:}\newline
\rep{$r$}\newline
\indent \flip{$0$}{$1$}\newline
\indent \flip{$0'$}{$1'$}.
\end{quote}

In the execution of this protocol, the leader selects a room in the $\off$ configuration, and changes it to the $0$ configuration. The prisoners then run the one-room protocol in that room (ignoring all of the other rooms, which are still in the $\off$ configuration). The leader then flips that room to the $0'$ configuration, and the prisoners run through the one-room protocol in that room \emph{again} to confirm that each prisoner has been in this room (using $0'$ and $1'$ instead of $0$ and $1$). After this, the leader puts that room in the $\done$ configuration and moves on to the next room. Note that the alternation between the $0/1$ version of the one room protocol and the $0'/1'$ version of the one room protocol is necessary here as otherwise the follower prisoners will not know when they have switched rooms.

The protocol just described requires at least $s=3$ switches per room.   We do not provide a full analysis of this protocol here, as in Section \ref{one room at a time sec} we discuss a refinement that gets by with only $s=2$.

\subsection{A Two-Switch Solution}\label{sec:2s}

The following is a relatively simple winning protocol for $s=2$ that works for arbitrary $n$ and $r$.  We name our four configurations (in some order) $0$ (the initial configuration), $1,\nextc,$ and $\ready$. The idea here is that instead of processing the rooms one at a time, we will process the \textit{prisoners} one at a time.

In our protocol, we have at most one ``active'' prisoner at a time; this prisoner will verify that they have visited every room by first flipping all rooms from the $0$ configuration to the $1$ configuration, and then flipping them all back. We then need a way to pass the torch to the next active prisoner---and ensure that the prisoner is finished being active is counted properly. In order to do this, the active prisoner will flip one room to the $\nextc$ configuration, and the designated leader will flip that room to the $\ready$ configuration (incrementing a counter in the process). The next prisoner who has not yet been active and sees the room in the $\ready$ configuration changes that room's configuration to $0$ and becomes the next active prisoner. We continue  this process until all prisoners have a chance to be active and visit all rooms---and then be counted.

Formally, the protocol is as follows:
\begin{quote}
\tit{Leader's Algorithm:}\newline
\rep{$r$} \com The Leader is the active prisoner. \newline
\indent \flip{$0$}{$1$} \newline
\rep{$r$}\newline
\indent \flip{$1$}{$0$} \newline
\flip{$0$}{$\nextc$} \com Sets a room to $\nextc$ to signal the next active prisoner. \newline
\rep{$n$} \com Counts the number of active prisoners (including himself). \newline
\indent \flip{$\nextc$}{$\ready$} \newline
\dec.
\end{quote}
\begin{quote}
\tit{Other Prisoners' Algorithm:}\newline
\flip{$\ready$}{$0$} \com Waiting for a room in $\ready$ state before becoming active.\newline
\rep{$r$}\newline
\indent \flip{$0$}{$1$} \com Visits all rooms.\newline
\rep{$r$}\newline
\indent \flip{$1$}{$0$} \com Resets all rooms.\newline
\flip{$0$}{$\nextc$}. \com Signals the leader that they are done.
\end{quote}

In order to analyze this protocol, we introduce some terminology.  We say that a prisoner is \emph{exhausted} if he is either a non-leader who has reached the end of his algorithm, or is the leader and has completed the first five lines of his algorithm.  We define an \emph{active} prisoner to be one who is either a non-leader who has completed the first line of his algorithm but is {not exhausted}, or a leader who is {not exhausted}.  We define a prisoner to be \emph{waiting} if he is neither active nor exhausted.  It is clear that each prisoner progresses sequentially from waiting to active to exhausted (expect for the leader, who is never waiting).

The correctness of our protocol depends heavily on the following invariant. At all times exactly one of the following holds:
\begin{itemize}
\item all rooms are in either the $0$ or $1$ configuration, and there is exactly one active prisoner; or
\item all rooms are in the $0$ configuration, except for a single room in the $\nextc$ or $\ready$ configuration, and there is no active prisoner.
\end{itemize}
We show that our invariant holds by induction. It is easy to check that the first condition holds in the initial configuration.  Now, when a prisoner becomes active, all rooms are in or are changed to the $0$ configuration.  As this prisoner remains active, no other prisoner will alter room configuration because non-active prisoners ignore rooms in state $0$ or $1$. Therefore, while active, this prisoner will change all rooms to the $1$ configuration and then change all rooms back to $0$ before becoming inactive. As this prisoner becomes inactive, they set one room to the $\nextc$ state, maintaining our invariant. This invariant continues to hold when the leader reconfigures this room from $\nextc$ to $\ready$ (and this is the only reconfiguration that can be performed by any of the inactive prisoners). This new state holds until the next prisoner becomes active.

In order to show that our protocol never declares incorrectly, we observe two more properties of it.  The first is that exhausted prisoners have visited all rooms; this follows from the preceding analysis and the fact that exhausted prisoners must have once been active.  Second, we claim that at the end of the $k$-{th} iteration of the final repeat loop on of the leader's algorithm, there are exactly $k$ exhausted prisoners.  We prove this by noting that our protocol cycles through the following three stages:
\begin{enumerate}
\item There is an active prisoner.
\item There is a room in the $\nextc$ configuration.
\item There is a room in the $\ready$ configuration.
\end{enumerate}
The claim follows from the fact that we increment the number of exhausted prisoners exactly when we transition from stage 1 to stage 2, and that we increment the counter on the repeat loop exactly when we transition from stage 2 to stage 3. Together, our claims imply that a declaration is made only when all $n$ prisoners are exhausted---and thus only when each prisoner has visited every room.  Thus, the protocol never declares incorrectly.

To prove that the protocol always terminates, we note that it always eventually either progresses to the next stage or the leader declares.  Since we can only transition from stage 2 to 3 a total of $n$ times, this proves that the protocol will eventually declare.  To show that the protocol will always progress from stage 1, we observe the following.  As a prisoner becomes active, all room are in the $0$ configuration. Since no other prisoner will alter any configurations during this stage, the active prisoner will switch every room to the $1$ configuration as he visits that room.  He will then switch each room to the $0$ configuration as he visits it.  He will then switch the next room he visits to the $\nextc$ configuration and move the algorithm to stage 2.  Stage 2 will always progress to stage 3 when the leader finds the room in the $\nextc$ configuration.  Stage 3 will progress to stage 1 when any waiting prisoner reaches the room in the $\ready$ configuration.  This will always happen eventually---unless there are no waiting prisoners, which only happens when all prisoners are exhausted, at which point the leader has reached the last line of his algorithm and is ready to declare.

So to summarize:
\begin{theorem}
There exists a winning protocol for the prisoners if there are two lightswitches per room and all switches start in the $\off$ configuration.
\end{theorem}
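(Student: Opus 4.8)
The plan is to prove the theorem constructively by exhibiting the explicit two-switch protocol displayed above---the Leader's Algorithm together with the Other Prisoners' Algorithm---and verifying that it is \emph{winning}. I would split ``winning'' into two independent obligations: a \emph{safety} claim, that the protocol never declares incorrectly (whenever the leader reaches the \dec\ line, every prisoner has already visited every room), and a \emph{liveness} claim, that under any valid schedule the leader eventually reaches the \dec\ line. Both obligations I would anchor on the single structural invariant stated above, which pins down the global room configuration at every instant.

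For the safety claim, I would first record that each prisoner passes monotonically through the phases \emph{waiting}, \emph{active}, and \emph{exhausted} (the leader skipping \emph{waiting}), and then establish the invariant by induction over the discrete events that can change the global state. The inductive content is exactly that non-active prisoners ignore every room in configuration $0$ or $1$, so that while one prisoner is active no other prisoner perturbs any room; this forces the system to cycle deterministically through the three stages (an active prisoner, then a room in $\nextc$, then a room in $\ready$). I would then count: the number of exhausted prisoners increases by one precisely at the stage-$1\to2$ transition, while the counter on the leader's final repeat loop advances by one precisely at the stage-$2\to3$ transition, so after $k$ passes through that loop there are exactly $k$ exhausted prisoners. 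Hence the declaration, made after $n$ passes, occurs only when all $n$ prisoners are exhausted; and since every exhausted prisoner was once active, and an active prisoner completes two full sweeps over all rooms ($0\to1$ everywhere, then $1\to0$ everywhere), each such prisoner has visited every room. Thus the declaration is always correct.

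The liveness claim is where I expect the main difficulty, since it must rule out livelock against an adversarial (though valid) schedule. The argument proceeds stage by stage. In stage $1$ the unique active prisoner must complete his two sweeps and set one room to $\nextc$; because the invariant guarantees he acts in isolation, validity (each prisoner is sent to each room arbitrarily often) ensures he finishes in finite time and passes to stage $2$. In stage $2$ the leader eventually encounters the unique $\nextc$ room and passes to stage $3$; in stage $3$ some waiting prisoner eventually reaches the unique $\ready$ room, becomes active, and returns the system to stage $1$. The one exceptional case is that stage $3$ has no waiting prisoner to advance it---but by the counting above this happens exactly when all prisoners are exhausted, at which point the leader has already reached the \dec\ line. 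So the unique ``stuck'' configuration coincides with a correct declaration, and progress is otherwise guaranteed.

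Combining safety and liveness shows the protocol is winning. I would close by noting that four named configurations fit into $s=2$ switches, and that the all-$\off$ start is exactly what makes the first branch of the invariant hold initially, so the hypotheses of the theorem are met. The crux of the whole argument is the invariant: once it is in place, safety reduces to a bookkeeping count and liveness to a per-stage progress argument, with liveness the more delicate of the two because it must exclude infinite stalling while simultaneously identifying the all-exhausted state as the sole terminal configuration.
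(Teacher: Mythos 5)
Your proposal is correct and follows essentially the same path as the paper's own analysis: the same protocol, the same waiting/active/exhausted terminology, the same two-case structural invariant proved by induction, the same counting argument matching exhausted prisoners to iterations of the leader's final repeat loop, and the same stage-by-stage liveness argument whose only ``stuck'' state is the all-exhausted configuration where the leader declares. There is nothing missing and no substantive difference to report.
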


\begin{remark} Note that the protocol presented here satisfies the hypothesis of Lemma \ref{reductionlemma}, as no prisoner other than the leader will change the configuration of any room until he has seen a room in the $\ready$ configuration.  Thus with only $s=2$ switches in each room, we have a winning protocol for an arbitrary known starting configuration.
\end{remark}

\subsection{Solving the Problem One Room at a Time}\label{one room at a time sec}

As we already noted, the solution we just presented is substantially different from our earlier three-switch solution. Indeed, whereas our two-switch protocol proceeded one prisoner at a time, our three-switch protocol solved the problem one room at a time, with a single active room which is changing configuration, and the prisoners ensuring that each of them has visited that room before moving on to count visits to the next one. We might ask whether the one-room-at-a-time protocol can be made to work with only two switches---and in fact with some added complexity, we show that it can be.

To begin, we provide different names for the configurations.  We rename the four configurations $0$ (the initial configuration), $1$, ${\upc}$ and $\done$.

At a high level, in our protocol, the leader will select rooms one at a time and plays a game toggling the chosen room's state between two possibilities with the other prisoners (similar to the one-room, one-switch solution), before putting that room in the $\done$ configuration. It is easy to see how this works for the first room. The leader puts that room in the $\upc$ configuration and each other prisoner flips $\upc$ to $1$ once, while the leader flips it back from $1$ to $\upc$ $n-1$ times. At the end of this sequence of reconfigurations, the leader flips the room from $\upc$ to $\done$, establishing that all prisoners have visited the first room.

Naively, the leader could take another room from the $0$ configuration and change it to the $\upc$ configuration and he and the other prisoners could play the same game again. Unfortunately, this does not work so easily. The problem is that the non-leaders will not be able to distinguish between the second room being in the $\upc$ configuration and the \textit{first} room being in the $\upc$ configuration. So the leader needs a way to signal that the first round is over.

To do this, we note that in the first round there is never simultaneously a room in the $\upc$ configuration \textit{and} a room in the $1$ configuration. So if a prisoner seems a room in $\upc$ and then another in $1$, it must have been the case that they saw the active room twice, with that room reconfigured in the interim. However, the active room is only reconfigured a limited number of times. Therefore, no prisoner will ever see---during the first round---a long sequence of a room in the $\upc$ configuration followed by a room in the $1$ configuration followed by a room in the $\upc$ configuration and so on. This provides the leader a way to signal to the other prisoners that they have reached the second round. The leader does this by flipping all non-done rooms from the $0$ configuration to the $1$ configuration, and then flipping one of them to the $\upc$ configuration. The other prisoners will then eventually see a long sequence of alternative $\upc$ and $1$ configurations, and thus know that the second round has started.

Unfortunately, this idea does not work if the prisoners are toggling between $\upc$ and $1$ in the second round, as then the leader will see \emph{many} rooms in the $1$ configuration, which will prevent him from working with just a single room. This is solved by letting the prisoners toggle between $\upc$ and $0$ in the second round instead of $\upc$ and $1$---and each round after that, they must alternate between the two.

A final slight complication is that this signaling procedure does not work for the last round. This is because there is only one room left and so the prisoners will not be able to see many alternating configurations between $\upc$ and $1$, as there is only one non-$\done$ room. However, it turns out that we will not actually need the last round of the algorithm, as the signaling stage in the \textit{previous} round forces the prisoners to  visit both of the last two rooms in order to see the appropriate alternating sequence.

We present the algorithms for the leader and the others in the case where $r$ is odd.  The case where $r$ is even can be handled with a slight modification.

\begin{quote}
\tit{Leader's Algorithm:}\newline
\rep{$(r-1)/2$}\newline
\indent \flip{$0$}{$\upc$} \com{Readying next active room, start of the $0$-phase}\newline
\indent \rep{$n-2$} \newline
\indent \indent \flip{$1$}{$\upc$} \com{Counting other prisoners} \newline
\indent \flip{$1$}{$\done$} \com{Marking active room as done}\newline
\indent \rep{$r$ minus the number of rooms leader has configured to $\done$} \com{Transition phase}\newline
\indent \indent \flip{$0$}{$1$} \com{Ready rooms for the next phase}\newline
\indent \flip{$1$}{$\upc$} \com{Readying next active room, start of the $1$-phase}\newline
\indent \rep{$n-2$}\newline
\indent \indent \flip{$0$}{$\upc$} \com{Counting other prisoners}\newline
\indent \flip{$0$}{$\done$} \com{Marking active room as done}\newline
\indent \rep{$r$ minus the number of rooms leader has configured to $\done$} \com{Transition phase}\newline
\indent \indent \flip{$1$}{$0$}  \com{Ready rooms for the next phase}\newline
\dec
\end{quote}

\begin{quote}
\tit{Non-Leader's Algorithm:}\newline
\rep{$(r-1)/2$}\newline
\indent \rep{$n$} \com{Verify that rooms are set up for the $0$-phase} \newline
\indent \indent \seeC{$0$} \newline
\indent \indent \seeC{$\upc$} \newline
\indent \flip{$\upc$}{$1$} \com{Indicate visit of active room}\newline
\indent \rep{$n$} \com{Verify that rooms are set up for the $1$-phase}\newline
\indent \indent \seeC{$1$}\newline
\indent \indent \seeC{$\upc$}\newline
\indent \flip{$\upc$}{$0$} \com{Indicate visit of active room}\newline
\end{quote}
Note the SEE commands above:  These make sure that each prisoner stays in step with all of the others.  Waiting to see $0$s before flipping guarantees that they are in phase where prisoners are toggling between $\upc$ and $1$.  The number of repeats is necessary to ensure that they are not just seeing alternations between $0$ and $\upc$ in the active room.

To show that this protocol works we need some definitions.  During some parts of the protocol, the leader is flipping rooms between $0$ and $1$.  We call these times \emph{transition phases}.  When not in a transition phase, some rooms are in the $\done$ configuration and are called \emph{finished}.  Otherwise, either all but one of the unfinished rooms are in the $0$ configuration or all but one of the unfinished rooms are in the $1$ configuration. We call these periods the $0$-phase and $1$-phase respectively, and they correspond to the sections of the leader's algorithm where they are running \flip{$1$}{$\upc$} and \flip{$0$}{$\upc$} respectively (as indicated). During one of these phases there is one unfinished room, which we call the \emph{active room} which is toggled between $\upc$ and $1$ in the $0$-phase or between $\upc$ and $0$ in the $1$-phase. The remaining rooms, are not reconfigured at all during this phase.

We have left to prove that this description holds and that the protocol works. In the following analysis, we define the phases (transition, $0$, and $1$) based on where the leader is in their protocol as indicated in the comments above. In particular, we will need to prove the following:
\begin{enumerate}
\item During a $0$-phase, all rooms are in the $0$ or $\done$ configuration except for a single active room in the $1$ or $\upc$ configuration. Likewise, during a $1$-phase all rooms are in the $1$ or $\done$ configurations except for a single active room in the $0$ or $\upc$ configuration. Furthermore, at the start of this phase, the active room is in the $\upc$ configuration.
\item During the $0$-phase, no non-leader is at the \flip{$\upc$}{$0$} line of their protocol, and during the $1$-phase no non-leader is in the \flip{$\upc$}{$1$} line.
\item During the $0$-phase, each non-leader will flip the active room from $\upc$ to $1$ exactly once and will reconfigure no other rooms. During the $1$-phase, each non-leader will flip the active room from $\upc$ to $0$ exactly once and will reconfigure no other rooms.
\item During the transition phase, every room that has ever been an active room is in the $\done$ configuration. And the leader reconfigures all other rooms from $0$ to $1$ or from $1$ to $0$ while no other room reconfigurations take place.
\end{enumerate}
We show that these invariants hold by induction. We note that the statements about the $0$-phase and $1$-phase are symmetric, so we will only prove the former and under the assumption that these invariants hold for all previous phases.

We begin by showing that at the start of the $0$-phase all rooms are in the $\done$ or $0$ configuration with one in the $\upc$ configuration. This clearly holds after the first line or the leader's algorithm. Otherwise, invariant 4 implies that the leader reconfigured all non-$\done$ rooms to $0$ in the previous transition phase and reconfigured one of the $0$'s to $\upc$ at the start of the phase. We also note that at the start of the phase, each non-leader is between their \flip{$\upc$}{$0$} command and their \flip{$\upc$}{$1$} command. This is true at the start of the algorithm, and on later iterations, by assumption they executed their \flip{$\upc$}{$0$} in the last $1$-phase and have not executed \flip{$\upc$}{$1$} since.

From here we claim that during the $0$-phase, no room other than the active room is reconfigured. This is because reconfiguring a different room would require reconfiguring a room not in the $1$ or $\upc$ configuration. The leader does not do this until the transition phase. The non-leaders will not do this until they have seen a $1$ followed by an $\upc$ at least $n$ times. We claim that no non-leader sees this during the $0$-phase. This is because the first non-leader to see this must see the active room in these configurations (as no other room is in either the $1$ or $\upc$ configuration during this period). This in turn would imply that the leader must have reconfigured it from $1$ to $\upc$ at least $n$ times (since no non-leader is reconfiguring in this direction). However the leader reconfigures in this way at most $n-1$ times during this phase.

Next we will show that during the $0$-phase the active room will be reconfigured between the $1$ configuration and $\upc$ configuration $n-1$ times. We know that the leader will not progress with their protocol until they have reconfigured it from $1$ to $\upc$ a total of $n-1$ times. Furthermore, each of the $n-1$ non-leaders will have an opportunity to reconfigure the active room from $\upc$ to $1$ once during this phase. We claim that if the active room has not been reconfigured between $\upc$ and $1$ the full $n-1$ times, that it will eventually (assuming that each prisoner is lead into each room enough times) be reconfigured more. If the active room is currently in the $1$ configuration, the leader will eventually see it there and reconfigure it. If the active room is currently in the $\upc$ configuration and has been flipped from $\upc$ to $1$ fewer than $n-1$ times, there is at least one non-leader who has not reconfigured this room during this phase. This prisoner may still have some \seeC{$0$} and \seeC{$\upc$} commands to execute before their \flip{$\upc$}{$1$} command. However, if no other prisoner reconfigures the active room in the interim, they will eventually see the active room in the $\upc$ configuration followed by one of the unfinished rooms in the $0$ configuration enough times to finish their SEE' commands. Their next visit to the active room will cause it to be reconfigured.

The above implies that the protocol will eventually progress from the $0$-phase to the next transition phase. We note that it also implies that every prisoner visits the active room before this transition. This is because the leader must have reconfigured the active room from $1$ to $\upc$ a total of $n-1$ times. This is only possible if it was reconfigured from $\upc$ to $1$ this many times. However, each non-leader can only do so once. Therefore, by the end of the phase, each non-leader must have reconfigured the active room from $\upc$ to $1$.

We now discuss the transition phases. We consider the transition phase after a $0$-phase as the transition phase after a $1$-phase will by symmetric. At the start of the transition phase, the leader has just reconfigured the previously-active room to the $\done$ configuration, and all other rooms are in the $\done$ or $0$ configurations.

We next show that no non-leader reconfigures any room during this transition phase. This is because a non-leader will only reconfigure rooms found in the $\upc$ configuration. However, during the transition phase, no room is in the $\upc$ configuration, nor does the leader reconfigure any room into the $\upc$ configuration. During this transition phase, the leader does reconfigure a number of $0$ rooms to $1$ equal to the number of rooms in the $0$ configuration at the start of the phase. This is because at the start of the phase every room is in the $0$ or $\done$ configuration, so this number should be $r$ minus the number of rooms in the $\done$ configuration. However, since only the leader reconfigures rooms into the $\done$ configuration and since no prisoner reconfigures rooms out of the $\done$ configuration, the number of iterations in the leader's REPEAT loop is the number of rooms in the $0$ configuration. Since no rooms are being reconfigured by other prisoners, the leader will reconfigure each $0$ room to $1$ as they find it, and then reconfigure the next $1$-room to $\upc$, starting the next phase. We note that this leaves the rooms in the configurations needed at the start of the $1$-phase.

The above analysis shows that our invariants hold and that this protocol will eventually terminate. We have left to show that at the end of the protocol that every prisoner will have visited every room. Firstly, as we discuss above the active room in any $0$- or $1$-phase must be visited by every prisoner before progressing. Since the rooms in the $\done$ configuration are exactly the previously-active rooms, this means that every room in the $\done$-configuration was visited by every prisoner. We note that exactly $r-1$ rooms are put into the $\done$-configuration by the end of the protocol. This leaves a single remaining room to consider.

We note that this remaining room was the unique room in the $1$ configuration during the last $1$-phase. The leader must have visited this room because the leader must visit every non-finished room in every transition phase. To show that non-leaders visited this room, we note that each non-leader must have reconfigured the active room during the last $1$-phase. However, in order to do this, they must have seen $n$ alternations between rooms in the $1$- and $\upc$-configurations. But, as discussed above, they can have seen at most $n-1$ of these alternations in the previous $0$-phase and transition phase. Therefore, they must have seen this final room at least once during the last $1$-phase; this completes our argument.

\subsection{One Switch Does Not Suffice}

We now know that with two switches there are multiple strategies that allow the prisoners to win, for arbitrary $n$ and $r$.  We now show that one switch is \textit{insufficient} as long as $n\geq 2$ and $r\geq 5$.  

Given the sequence of rooms visited by prisoners and the actions which they take, we define the \emph{observed history} to be the ordered sequence of events describing a particular prisoner entering a room in some specified initial configuration and then leaving it in some specified configuration. For example, if the exercise starts with prisoner 1 visiting room 1 and changing the configuration from $\off$ to $\on$, and then prisoner 2 visiting and not changing the configuration, the observed history would look like this:
\begin{itemize}
\item Prisoner 1 enters a room in the $\off$ configuration and changes it to the $\on$ configuration.
\item Prisoner 2 enters a room in the $\on$ configuration and leaves it in the $\on$ configuration.
\end{itemize}

We say that a prisoner, $p$, \emph{owns} a room configuration, $c$, at some particular point in time if he has visited all rooms that are in configuration $c$ at that point in time.

We next say that a prisoner, $p$, \emph{provably owns} a room configuration, $c$, at some point in time if in all visit sequences with the same observed history, $p$ owns $c$ at that time.

\begin{lemma}\label{declareownLem}
A winning protocol for the prisoners will never declare unless all prisoners provably own all configurations.
\end{lemma}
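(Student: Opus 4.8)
The plan is to argue by contradiction, exploiting the fact that a winning protocol can never produce an incorrect declaration---not only on valid schedules, but even on a schedule that is merely a finite prefix of a valid one. Suppose that under some visit sequence $S$ a prisoner declares at a time $t$, and yet some prisoner $q$ fails to provably own some configuration $c$ at time $t$. I will manufacture a valid schedule under which this very same declaration is incorrect, contradicting the assumption that the protocol is winning.

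First I would isolate the two facts that drive everything. The first is an \emph{indistinguishability principle}: since each prisoner's behavior is a deterministic function only of the configurations he has personally seen, any two visit sequences with the same observed history elicit identical actions from every prisoner; in particular, if a declaration occurs at step $t$ in one, it occurs at step $t$ in the other. The second is an \emph{extension principle}, identical in spirit to the one used in the proof of Theorem~\ref{unknownStartThrm}: any finite visit sequence can be prolonged into a valid schedule agreeing with it through step $t$, and because the protocol's output at each step depends only on the history up to that step, the declaration at step $t$ survives the extension. Taken together, these say that \emph{every} declaration a winning protocol ever makes, under any sequence whatsoever, must be correct---i.e.\ every prisoner has visited every room by step $t$.

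With these in hand the contradiction is immediate. Because $q$ does not provably own $c$ at time $t$, the definition of ``provably owns'' supplies a visit sequence $S'$ with the same observed history as $S$ in which $q$ does \emph{not} own $c$ at time $t$; that is, some room $R$ lying in configuration $c$ at time $t$ has not been visited by $q$ in $S'$. By the indistinguishability principle the declaration still occurs at step $t$ in $S'$, and by the extension principle we may assume $S'$ is a prefix of a valid schedule. But in $S'$ prisoner $q$ has not visited room $R$ by step $t$, so this declaration is incorrect, contradicting the winning property. Hence at a declaration no prisoner can fail to provably own any configuration. Applying this to every prisoner and every configuration yields the claim; the case in which no room is in configuration $c$ at time $t$ is vacuous, since the multiset of room configurations at each time is itself determined by the observed history together with the known start (each event merely replaces one in-configuration by one out-configuration).

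I expect the main obstacle to be cleanly handling the gap between a \emph{visit sequence} and its \emph{observed history}: ``owning'' a configuration is a statement about which specific (yet indistinguishable) rooms a prisoner has entered, information not recoverable from the observed history, whereas ``provably owning'' quantifies over all visit sequences compatible with that history. The argument turns precisely on producing a compatible sequence whose room-to-event assignment leaves some room in configuration $c$ unvisited by $q$, and care is needed to verify both that such a reassignment genuinely preserves the observed history up to the declaration and that the resulting sequence can be extended to a valid schedule while keeping the premature declaration in place.
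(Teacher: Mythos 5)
Your proposal is correct and follows essentially the same route as the paper's proof: invoke the definition of provable ownership to obtain a visit sequence with identical observed history in which $q$ does not own $c$, then use the fact that prisoners' actions depend only on observed history to transfer the declaration to that sequence, where it is incorrect. The only difference is one of explicitness---the paper compresses the final step into ``thus the strategy cannot be winning,'' whereas you spell out the extension of the offending finite sequence to a valid schedule (borrowing the device from Theorem~\ref{unknownStartThrm}), which is a detail the paper leaves implicit.
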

\begin{proof}
Suppose that after some sequence of visits, the prisoners declare without prisoner $p$ provably owning configuration $c$.  This means that there is some sequence of visits with the same observed history in which $p$ does not own $c$.  Since the prisoners' behaviors (including their declarations and configuration changes) depend only on the observed history, this means that for that sequence of visits, the prisoners will declare before $p$ has visited all rooms in configuration $c$.  Thus the prisoners' strategy cannot be winning.
\end{proof}

{In order to reason about the concept of provable ownership we will need the following lemma:}
\begin{lemma}\label{ownchangeLem}
Whether or not a prisoner $p$ provably owns a configuration $c$ changes exactly in the following circumstances:
\begin{itemize}
\item Prisoner $p$ loses provable ownership of configuration $c$ when a room of a configuration not provably owned by $p$ is reconfigured to $c$ by some other prisoner.
\item Prisoner $p$ gains provable ownership of a configuration $c$ when he visits the only room in configuration $c$ or the only room in configuration $c$ is reconfigured to some other configuration.
\end{itemize}
\end{lemma}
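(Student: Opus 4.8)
The plan is to reason about \emph{interpretations} of the observed history: an interpretation assigns each event seen so far to an actual room, consistently with the (known) starting configurations and with each event moving one room from its stated in-configuration to its stated out-configuration. The first thing I would record is that the multiset of current room configurations is the same in every interpretation---each event deterministically removes one room from configuration $a$ and produces one in configuration $b$---so the number $N(c)$ of rooms currently in a configuration $c$ is interpretation-independent. By definition $p$ provably owns $c$ exactly when $p$ has visited every config-$c$ room in \emph{every} interpretation; equivalently, $p$ fails to provably own $c$ exactly when there is a \emph{witness} interpretation containing a config-$c$ room that $p$ has never visited. The whole argument runs through this witness characterization.

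The two basic moves are dual. To show that failure of ownership is \emph{preserved} across an event that is not one of the listed triggers, I extend a witness interpretation from just before the event to just after it by \emph{routing} the new event to a room other than the witness room: since the event requires a room in its in-configuration and, in the relevant cases, $N$ of that configuration is at least two, such a room is available, and the witness room stays in configuration $c$ and $p$-unvisited (the acting prisoner is $q\neq p$, or the visited room is a different config-$c$ room). Dually, to show ownership is \emph{preserved} (no loss), I use that if $p$ provably owns a configuration then in \emph{every} interpretation all rooms of that configuration are $p$-visited, so a room reconfigured out of a provably-owned configuration arrives in $c$ already $p$-visited in every interpretation and cannot create a witness. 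Only finitely many events must be matched here---interpretations need agree with the history only up to the current time, not extend to a full valid schedule---so the routing is always legitimate.

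With these tools the proof is a finite case analysis on an event $(q,a,b)$ according to whether $q=p$, how $a$ and $b$ relate to $c$, and whether $N(c)=1$ or $N(c)\ge 2$ beforehand. When the event neither touches configuration $c$ (that is, $a\neq c$ and $b\neq c$) nor is a $p$-visit of a current config-$c$ room, both the config-$c$ rooms and $p$'s visited set are untouched, so ownership is unchanged by the two arguments above. The only changes then fall out exactly as claimed: a room reconfigured \emph{into} $c$ by some $q\neq p$ creates a witness precisely when its \emph{source} configuration $a$ is not provably owned by $p$ (the loss clause), whereas if $a$ is provably owned no witness appears; and $p$ gains ownership only when $c$ becomes owned in every interpretation, which by the routing argument can happen only when $N(c)=1$---either $p$ visits that unique room (first gain clause) or it is reconfigured away so that $c$ becomes empty and is vacuously owned (second gain clause).

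The step I expect to be the main obstacle is \emph{necessity}---that these are the \emph{only} circumstances---because it requires the routing construction to go through uniformly, and in particular the delicate $N(c)=1$ versus $N(c)\ge 2$ split when $p$ himself acts on a config-$c$ room: for $N(c)\ge 2$ I must always be able to send $p$'s visit to a config-$c$ room distinct from a fixed witness, whereas for $N(c)=1$ no such alternative exists and the gain genuinely occurs. Verifying that a witness interpretation can always be \emph{extended} (rather than merely assumed to exist) is where the multiset-invariance of $N(\cdot)$ does the real work, and keeping straight which configuration's ownership the loss clause refers to---the source configuration $a$, evaluated just before the event---is the other place where care is needed.
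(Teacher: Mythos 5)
Your proof is correct and takes essentially the same approach as the paper: your ``interpretations'' are exactly the paper's ``visit sequences with the same observed history,'' your multiset-invariance of $N(c)$ is the paper's observation that configuration counts can be inferred from the observed history, and your witness/routing argument---including the $N(c)=1$ versus $N(c)\ge 2$ split for the gain clause and re-routing the reconfiguring visit to an unvisited room for the loss clause---is precisely the paper's construction. The only difference is presentational, in that you name and systematize the machinery (witnesses, routing, explicit case analysis) that the paper uses implicitly.
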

\begin{proof}
Note that the number of rooms currently in each configuration can be inferred by the observed history. In particular, it can be determined when one of the situations in Lemma~\ref{ownchangeLem} has taken place by considering only the observed history.

Clearly $p$ can lose provable ownership of a configuration $c$ only if there is some possible sequence of visits with the same observed history in which he loses ownership of $c$.  This can happen only if some other prisoner reconfigures a room that $p$ has not visited into configuration $c$.  This in turn only happens when this other room is in a configuration, $c'$, which $p$ does not own. This in turn happens only if $p$ does not provably own $c'$.  Thus, $p$ can only lose provable ownership of $c$ if some other prisoner reconfigures a room from a configuration $c'$, not provably owned by $p$, to configuration $c$.  On the other hand, if $p$ does not provably own $c'$ and some other prisoner reconfigures $c'$ to $c$, there is some sequence of visits with the same observed history in which $p$ did not own $c'$ before this visit.  In this sequence $p$ had not visited all rooms currently in configuration $c'$, and thus without changing the observed history, we may have the last visit reconfigure a room that $p$ has not visited to configuration $c$. In this alternative visit sequence, we have the same observed history, but $p$ does not own $c$ at the end of it. Therefore, after such an event $p$ no longer provably owns $c$.

On the other hand, if $p$ visits the only room currently in configuration $c$, or if the only room currently in configuration $c$ is reconfigured into another configuration, it is clear that $p$ owns $c$ after this takes place. It is also easy to see that it is possible to determine when either of the above situations has taken place purely by considering the observed history. Therefore, under either of these situations, if $p$ did not previously provably own $c$, it gains such ownership. However, if $p$ did not provably own $c$ before and some visit caused to $p$ to gain such provable ownership. Then there must have previously been some sequence of visits with the same observed history in which $p$ did not own $c$ but for which any additional visit with the same observed data would cause $p$ to own $c$. This can happen only if the last room in configuration $c$ that $p$ had not yet visited is either visited by $p$ or reconfigured to another configuration. However, if more than one room was in configuration $c$ before this last visit, then the same observed history will be possible so that the last visit is not to the final room in configuration $c$ unvisited by $p$ (either the last visit is to a room in another configuration or it could be made to be to a different room in configuration $c$ without altering the observed history). Therefore, $p$ gains ownership of $c$ only if a visit is made to the unique room in configuration $c$ either by $p$ or by another prisoner who reconfigures it to a different configuration.
\end{proof}

Next we declare a prisoner \emph{finished} if under no circumstances will that prisoner ever again change the configuration of a room or declare.  We note the following lemma about when a prisoner may become finished.
\begin{lemma}\label{finishLem}
In a winning strategy for the prisoners, no prisoner may become finished before he provably owns all configurations at once.
\end{lemma}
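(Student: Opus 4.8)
The plan is to argue by contradiction, exploiting the warden's ability to keep the prisoners' observed history ambiguous, in the same spirit as the proofs of Lemma~\ref{lem:s12} and Theorem~\ref{unknownStartThrm}. Suppose that, under a winning strategy, some prisoner $p$ becomes finished at a time $t$ at which he does not provably own all configurations at once. Since the number of rooms currently in each configuration is determined by the observed history (as noted in the proof of Lemma~\ref{ownchangeLem}), provably owning every configuration at time $t$ is the same as provably having visited every room by time $t$; so our assumption means that there is a visit sequence with the same observed history through time $t$ in which $p$ has not yet visited some room $\rho$. I would fix such an alternate reality.

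Next I would use the winning hypothesis to force a declaration. Extend the actual schedule past time $t$ to any valid schedule; since the strategy is winning, the prisoners declare at some finite time $T \geq t$, and by Lemma~\ref{declareownLem} every prisoner---in particular $p$---provably owns all configurations at time $T$. The goal is then to exhibit a \emph{second} valid schedule that is indistinguishable from this one through time $T$ but in which $p$ has still not visited $\rho$; the declaration at time $T$ is then incorrect in that schedule, contradicting the winning hypothesis. As in Theorem~\ref{unknownStartThrm}, validity of the second schedule is arranged by sending every prisoner to every room arbitrarily often after time $T$, so that the incorrect declaration at $T$ occurs on a genuinely valid schedule.

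The construction of this second schedule is where the hypothesis that $p$ is finished does the real work. Because $p$ is finished, every visit $p$ makes leaves the room's configuration unchanged, so each of $p$'s entries appears in the observed history merely as ``$p$ enters a room in configuration $a$ and leaves it in configuration $a$,'' revealing the configuration but not which physical room was entered. This lets the warden run, in the shadow reality, the same observed sequence of events while rerouting each of $p$'s visits to a room other than $\rho$, and routing the other prisoners' reconfigurations to physical rooms that keep $\rho$ out of play. Maintaining this rerouting consistently---so that the two schedules stay indistinguishable while $p$ never enters $\rho$---is exactly the situation governed by Lemma~\ref{ownchangeLem}: since $p$ never reconfigures a room, $p$ could come to provably own $\rho$'s configuration only by visiting the unique room in that configuration or by that unique room being reconfigured away, and so the warden's task reduces to ensuring that $\rho$ is never forced into the role of such a ``unique'' room before time $T$.

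The main obstacle I anticipate is precisely this bookkeeping: proving that the warden always has a legal room to which to reroute each of $p$'s visits and each reconfiguration by the other prisoners, so that $\rho$ remains unvisited by $p$ without altering the observed history. I would handle it by formulating an invariant asserting the persistent existence of a configuration that $p$ does not provably own, and checking via Lemma~\ref{ownchangeLem} that none of the permitted events---a visit by $p$, a reconfiguration by another prisoner, or a change in the configuration multiset forced by the observed history---can destroy that invariant before the declaration at time $T$. I expect this step to rely on there being enough rooms to absorb the rerouting (consistent with the section's standing hypothesis that $r \geq 5$), and it is the portion of the argument that I would write out in the most careful detail.
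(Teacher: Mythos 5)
Your overall skeleton (contradiction, a same-observed-history sequence at time $t$ in which the finished prisoner $p$ misses a room $\rho$, extension to a declaration at time $T$, and a schedule on which that declaration is wrong) matches the paper's, but the mechanism you propose for the last step---rerouting $p$'s visits to physical rooms other than $\rho$ while \emph{preserving the observed history} through time $T$---cannot work, and this is a genuine gap rather than deferred bookkeeping. Provable ownership is a function of the observed history alone: the warden's freedom to reassign physical rooms to events is exactly the freedom quantified over in the definition of ``provably owns.'' So under your own standing hypotheses, Lemma~\ref{declareownLem} says that at the declaration time $T$ every prisoner provably owns every configuration, which by definition means that \emph{every} visit sequence with that observed history has $p$ visiting all rooms: the second schedule you are trying to construct does not exist. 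Concretely, by Lemma~\ref{ownchangeLem} the observed history between $t$ and $T$ can record events such as ``$p$ enters a room in configuration $c$'' at a moment when exactly one room is in configuration $c$; no rerouting can send $p$ anywhere but that unique room, and a finished prisoner can perfectly well acquire provable ownership of all configurations through such events (his visits still appear in the observed history even though they reconfigure nothing). Nor can the warden dodge this by choosing the extension of the actual schedule cleverly: on every valid extension a winning strategy declares, and every declaration forces full provable ownership. The invariant you hope to maintain (``some configuration not provably owned by $p$ persists until $T$'') is therefore not something the warden controls at all. (Your appeal to the standing hypothesis $r\geq 5$ is a symptom of this: the lemma is true, and is needed, for every $r$, and the paper's proof uses no such bound.)

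The paper escapes by giving up on preserving the global observed history and instead using the one freedom that being finished actually grants: \emph{deletion}. Starting from the time-$t$ sequence with the same observed history in which $p$ has not visited all rooms, extend it until some prisoner declares, and then remove from the extended sequence all of $p$'s visits made after he became finished. Because a finished prisoner never reconfigures anything (and never declares), every other prisoner's individual sequence of observations is unchanged by the deletion, so the declaring prisoner still declares; but in the pruned sequence $p$ still has an unvisited room, so the declaration is incorrect---contradicting the winning hypothesis. Note that the indistinguishability invoked there is only ``each other prisoner sees the same thing,'' which is strictly weaker than ``same observed history'' (the pruned global history is shorter), and that extra slack is precisely what the finished hypothesis buys; your proposal never exploits it.
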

\begin{proof}
Assume for sake of contradiction that there is a winning strategy that does not satisfy this property.
Suppose that there is some sequence of visits which causes prisoner $p$ to become finished while $p$ does not provably own all configurations.  This means that there is some sequence of visits with the same observed history for which $p$ does not own all the configurations, and thus has not visited all rooms.  Extend this sequence of visits arbitrarily until one of the prisoners declares (which they will do if each prisoner is led into each room sufficiently many times).  Then remove from this sequence all room visits that $p$ made since he became finished.  Since after this point, $p$ did not reconfigure any rooms, none of the other prisoners can distinguish these two visit sequences, and hence they will still declare.  On the other hand, in this new visit sequence $p$ will not have visited all rooms, and the prisoners must have declared incorrectly.
\end{proof}

We are now ready to prove our main result for this section:
\begin{theorem}
There is no winning strategy when $s=1$, $n\geq 2$, and $r\geq 5$.
\end{theorem}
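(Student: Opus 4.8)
The plan is to show that, against a suitably chosen valid schedule, the prisoners can never reach a state in which every prisoner simultaneously provably owns both $\on$ and $\off$. Since a winning protocol must eventually declare on a valid schedule, and by Lemma~\ref{declareownLem} may declare only once all prisoners provably own all configurations---here just the two configurations $\on$ and $\off$---this yields the contradiction. Throughout I take the all-$\off$ starting configuration, which is enough to show $s=1$ is insufficient.

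The first ingredient is a monotonicity property extracted from Lemma~\ref{ownchangeLem}: once a prisoner $p$ provably owns both $\on$ and $\off$, he does so forever. Indeed, with $s=1$ the only reconfigurations are $\off\to\on$ and $\on\to\off$. If $p$ already owns $\off$, then any room another prisoner reconfigures to $\on$ was an $\off$-room that $p$ provably owns, so $p$ does not lose $\on$; symmetrically $p$ does not lose $\off$; and $p$'s own moves can only create ownership. Hence the set of prisoners owning both configurations only grows, and it suffices to maintain forever the invariant that at least one prisoner fails to own one of the two configurations.

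The engine for preventing ownership is the loss rule of Lemma~\ref{ownchangeLem}, which for $s=1$ says that a prisoner who does not own $\off$ loses provable ownership of $\on$ the instant any \emph{other} prisoner flips an $\off$-room to $\on$ (and symmetrically). This creates a sharp tension: gaining provable ownership of $\off$ forces the $\off$-count down to $1$, hence the $\on$-count up to $r-1\ge 4$, which requires many $\off\to\on$ flips---exactly the flips that destroy the $\on$-ownership of anyone not yet owning $\off$. I would exploit this by designating a \emph{victim} prisoner and having the warden maintain, by routing alone (never by fiat, since the flips themselves are the prisoners' choice), a pair of rooms in a common configuration whose visit-records are symmetric from the victim's viewpoint and of which the victim has visited at most one. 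By the relabeling argument underlying provable ownership, such a pair certifies that the victim does not provably own that configuration, so the victim owns at most one of $\on,\off$, and the invariant holds.

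The heart of the proof---and the step I expect to be the main obstacle---is showing this invariant can be maintained for all time against an arbitrary protocol while keeping the schedule valid. Validity forces the warden eventually to send the victim into both rooms of any fixed pair, collapsing that particular ambiguity, so the warden must perpetually manufacture fresh indistinguishable pairs and, when necessary, rotate which prisoner is the victim, all while guaranteeing that every prisoner visits every room infinitely often. This is where the hypotheses enter: $n\ge 2$ guarantees a prisoner distinct from the victim whose flips can be leveraged to reset the victim's ownership, and $r\ge 5$ guarantees enough spare rooms that, whenever one configuration is driven down to a single room, at least two rooms of the other configuration remain available to serve as an indistinguishable pair for the victim. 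I would organize the maintenance as an explicit case analysis of the warden's response to each possible prisoner action, tracking the two configuration counts together with the victim's visited set, and verify in each case that a valid pair persists. The bookkeeping that three genuinely spare rooms always remain---so that neither creating the unique $\off$-room nor the unique $\on$-room ever forces the victim to disambiguate---is precisely what breaks down for smaller $r$ and is the crux of the argument.
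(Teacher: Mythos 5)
Your high-level outline does match the paper's: fix the all-$\off$ start, invoke Lemma~\ref{declareownLem}, and have the warden maintain forever (by scheduling alone) the property that some prisoner fails to provably own some configuration. Your monotonicity observation is correct and follows cleanly from Lemma~\ref{ownchangeLem}. But there is a concrete missing idea, and it is precisely the one your ``engine'' cannot run without: \emph{the warden controls routing, never flipping}. Your plan requires that a prisoner $p'$ distinct from the victim can be ``leveraged'' to flip a room and thereby destroy the victim's ownership; nothing forces $p'$ to cooperate. A protocol may reach a state in which every prisoner other than the victim will never again reconfigure a room or declare, no matter where he is led --- in the paper's terminology, such a prisoner is \emph{finished}. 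From that point on, no schedule produces the flips your maintenance step needs, while validity still compels the warden to send the victim into every room infinitely often. The paper closes exactly this hole with a dedicated lemma (Lemma~\ref{finishLem}): if any prisoner becomes finished before provably owning all configurations, the strategy is already not winning, proved by a surgery argument that deletes the finished prisoner's subsequent visits from a declaring visit sequence without changing what anyone else observes. Correspondingly, the paper's warden invariant is only claimed to hold \emph{until some prisoner becomes finished}, and its dangerous case explicitly invokes non-finishedness (``since $p'$ is not finished, there is some sequence of visits that will cause them to either reconfigure a room or declare''). Your proposal has no counterpart to this, so the case analysis you defer cannot be completed for arbitrary protocols as stated.

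The second problem is that what you postpone is the actual substance of the proof. The paper's argument consists of an explicit four-way invariant --- at all times either (i) some configuration has at most one room and nobody provably owns the other configuration; (ii) each configuration has at least two rooms and nobody provably owns anything; (iii) exactly one prisoner provably owns exactly one configuration, which has exactly two rooms; or (iv) some prisoner is finished --- together with explicit scheduling rules (extend by the least-recently-occurred visit in the benign cases; in case (iii), send some non-finished $p'\neq p$ on a sequence forcing a flip or a declaration), with Lemma~\ref{ownchangeLem} and room counting verifying every transition. This is also where $r\geq 5$ genuinely enters: when $p'$ breaks an owned two-room configuration by flipping a third room into it, the complementary configuration must still contain at least two rooms, which requires $r\geq 5$. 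Your sketch gestures at this kind of bookkeeping but does not carry it out, and your certificate of non-ownership is also framed imprecisely: provable ownership is defined relative to the \emph{global} observed history (the full sequence of prisoner/entering-configuration/leaving-configuration events), not relative to what the victim alone has seen, so ``symmetric from the victim's viewpoint'' is not the right notion for the relabeling argument.
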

The basic idea of our proof will be to construct, for any fixed strategy, a sequence $\sch$ of visits with the following properties:
\begin{itemize}
\item Until some prisoner becomes finished, no configuration with at least one room in that configuration is ever provably owned by more than one prisoner.
\item Until some prisoner becomes finished, no prisoner provably owns any configuration with more than two rooms in that configuration.
\item Each prisoner visits each room infinitely often.
\end{itemize}
We produce the desired sequence as follows.  We say that we \textit{extend our visit sequence directly} to mean that we execute the prisoner-room visit that has least recently occurred, with ties broken arbitrarily. We extend directly if:
\begin{enumerate}
\item Some prisoner is finished.
\item All rooms or all but one room are in the same configuration.
\item No prisoner provably owns any configuration.
\end{enumerate}
Otherwise:
\begin{enumerate}\setcounter{enumi}{3}
\item If there is no finished prisoner and there is some configuration $c$ with exactly $2$ rooms in configuration $c$, some prisoner $p$ who provably owns $c$ and no other configuration and no other prisoner who provably owns any configuration: In this case, let $p'$ be a prisoner other than $p$. Since $p'$ is not finished, there is some sequence of visits that will cause them to either reconfigure or a room or declare. In particular, there is some sequence of $0$'s and $1$'s so that if $p'$ is lead into rooms in those configurations in that order, then $p'$ will either reconfigure a room or declare at the end of that sequence. As there are currently rooms in both the $0$ and $1$ configuration, we can send $p'$ on such a sequence of visits, and we will do so.
\end{enumerate}
We note that assuming our invariants hold, the above list of possibilities is exhaustive. We have left to verify that this visit sequence satisfies the above invariants.  In other words we claim that at all times, one of the following conditions is true:
\begin{enumerate}
\item One of the configurations has at most one room in it, and no prisoner provably owns the other configuration.
\item There are at least two rooms in each configuration and no prisoner provably owns any configuration.
\item One configuration contains $2$ rooms. That configuration is provably owned by exactly one prisoner, but other than that, no prisoner provably owns any configuration.
\item Some prisoner is finished.
\end{enumerate}
We show by induction that at least one of these conditions will always hold.  If some prisoner is finished, he will always remain finished.

If Condition 1 currently holds, without loss of generality, there is at most one room in the $\on$ configuration.  Condition 1 will continue to hold until a second room is reconfigured into the $\on$ configuration, by some prisoner $p$.  At this time, by Lemma \ref{ownchangeLem} no prisoner provably owns any configuration, except for $p$, who might provably own the $\on$ configuration.  Hence, one of Conditions 2 or 3 are satisfied.

If Condition 2 is satisfied, a single visit cannot cause any prisoner to provably own any configuration, therefore after any visit either Condition 1 or Condition 2 is satisfied.

If Condition 3 is satisfied, any visit which does not reconfigure a room cannot change room ownership and so we will remain in condition 3. Otherwise, let $p,c$ be the unique pair of a prisoner who provably owns a configuration and let $p'$ be any other prisoner. If $p'$ reconfigures a room to configuration $c$, then no prisoner will provably own any configuration and we will be in Condition 2. If $p'$ reconfigures any room away from configuration $c$, we will be in Condition 1. Since our procedure ensures that only prisoners other than $p$ will reconfigure rooms in this case, our invariants are maintained.

Note that if we proceed directly infinitely often, each prisoner will visit each room infinitely often.  This will happen under our procedure since after applying case 4 in our sequence-generating procedure, we are left in one of the other cases, which will cause us to proceed directly again.

We note that if we run this sequence no prisoner can provably own all configurations until after some other prisoner becomes finished. This means that any strategy by the prisoners either has some prisoner become finished before provably owning all of the configuration (which implies that their strategy is not winning by Lemma \ref{finishLem}), or never leads to any prisoner provably owning all configurations. In the latter case, either the prisoners eventually declare (in which case their strategy is not winning by Lemma \ref{declareownLem}), or they never do. In that last case, the prisoners never declare despite each of them being lead into each room infinitely often, and so their strategy is not winning.

Thus, in any case, the prisoners' strategy cannot be winning.

\begin{remark} We note that with some additional complications, that it is possible to prove a similar result for as few as three rooms.  It is clear from the solution to the classic puzzle that that one switch suffices for a single room.  Whether or not one switch  suffices for two rooms is unclear.
\end{remark}

\section{Dimmer Switches}

Throughout the arguments presented above it has been useful to name our possible room configurations.  The observant reader will notice that when given $s$ switches, our real constraint is that we have only $2^s$ possible configurations; thus, for example in the proof that two switches suffice, a total of four names were used.  As a generalization of this use of multiple switches we can instead think of rooms as having a single ``dimmer switch'' with a number of possible configurations.  The arguments so far show that if the dimmer switch has four or more configurations, then the prisoners have a winning strategy---and if the switch has two or fewer configurations, the  prisoners do not (assuming that there are sufficiently many rooms and prisoners).

Whether or not the  prisoners have a winning strategy with three configurations is an open question.  However, we have found some strategies that seemingly come close.  For instance, using only three configurations, it is possible to guarantee that each prisoner will eventually know that he has visited all rooms.

For this protocol, we will call our room configurations $\on$, $\off$ and $\nextc$.  All of the prisoners  use the following algorithm:

\begin{quote}
\flip{$\nextc$}{$\off$}\newline
\rep{$r$}\newline
\indent \flip{$\off$}{$\on$} \newline
\rep{$r$}\newline
\indent \flip{$\on$}{$\off$} \newline
\flip{$\off$}{$\nextc$}\newline
\end{quote}
Furthermore, one of the prisoners, who will we call the leader, prepends the following command to his algorithm:
\begin{quote}
\flip{$\off$}{$\nextc$}\newline
\end{quote}

In the execution of this protocol, the leader will set one room to the $\nextc$ configuration.  Then one at a time, the prisoners will see a room in the $\nextc$ configuration, and change it to $\off$.  This prisoner will then reconfigure all rooms to $\on$ and then to $\off$ again before changing one room to the $\nextc$ configuration and letting the next prisoner have a chance.  Once each prisoner reaches the end of his algorithm, he can conclude that he has visited every room. Unfortunately, no prisoner is able to tell whether the other prisoners have visited all the rooms yet.

We note that the protocol just described is essentially our one-prisoner-at-a-time solution---but without the $\upc$ configuration, the leader has no way of counting the number of other prisoners who have finished.

\subsection{A Probability-$1$ Solution With 3 Configurations}\label{prob1sec}

While we do not know of a winning protocol for the case of a three-configuration switch, we have found a protocol almost as good.  The following protocol \emph{wins with probability 1}, by which we mean that
\begin{enumerate}
\item The prisoners will never declare incorrectly.
\item After any sequence of visits there is always some possible sequence of future visits of bounded length after which the prisoners will declare.
\end{enumerate}
In order to describe this protocol we need to add another command to our protocol language.

\begin{quote}
\osc{$c_1$}{$c_2$}{$k$}: Upon entering a room in configuration $c_1$, reconfigure it to configuration $c_2$. Upon entering a room in configuration $c_2$, reconfigure it to configuration $c_1$.  Continue this behavior until you have performed the former operation more times than you have performed the latter operation.
\end{quote}

For this protocol, we label the prisoners  $1,2,\ldots,n$, and label the room configurations label  $0,1,2$, with $0$ as the starting configuration.  Prisoner $k$'s algorithm will be as follows for $k\neq 1,n$:

\begin{quote}
\tit{Prisoner $k$'s Algorithm ($k\neq 1,n$):}\newline
\rep{$k-1$}\newline
\indent \flip{$1$}{$0$}\newline
\indent \flip{$2$}{$1$}\newline
\indent \flip{$0$}{$2$}\newline
\seeC{$1$} \com{Transitioning}\newline
\rep{$n+r-1$} \com{Active, $0$-phase}\newline
\indent \flip{$0$}{$1$}\newline
\rep{$n+r-1$} \com{$1$-phase}\newline
\indent \flip{$1$}{$2$}\newline
\rep{$n+r-1$} \com{$2$-phase}\newline
\indent \flip{$2$}{$0$}\newline
\osc{$1$}{$0$}{$1$} \com{Transitioning}\newline
\rep{$n-k$} \com{No longer active}\newline
\indent \flip{$2$}{$1$}\newline
\indent \flip{$0$}{$2$}\newline
\indent \flip{$1$}{$0$}.\newline
\end{quote}

The algorithms for prisoners $1$ and $n$ are similar.  To get prisoner $1$'s algorithm, we remove the initial REPEAT block, and the initial SEE command.  To get prisoner $n$'s algorithm, we replace the OSCILLATE command by a DECLARE command, and remove the succeeding REPEAT block:

\begin{quote}
\tit{Prisoner $1$'s Algorithm:}\newline
\rep{$n+r-1$}\newline
\indent \flip{$0$}{$1$}\newline
\rep{$n+r-1$}\newline
\indent \flip{$1$}{$2$}\newline
\rep{$n+r-1$}\newline
\indent \flip{$2$}{$0$}\newline
\osc{$1$}{$0$}{$1$}\newline
\rep{$n-1$}\newline
\indent \flip{$2$}{$1$}\newline
\indent \flip{$0$}{$2$}\newline
\indent \flip{$1$}{$0$};\newline
\end{quote}

\begin{quote}
\tit{Prisoner $n$'s Algorithm:}\newline
\rep{$n-1$}\newline
\indent \flip{$1$}{$0$}\newline
\indent \flip{$2$}{$1$}\newline
\indent \flip{$0$}{$2$}\newline
\seeC{$1$}\newline
\rep{$n+r-1$}\newline
\indent \flip{$0$}{$1$}\newline
\rep{$n+r-1$}\newline
\indent \flip{$1$}{$2$}\newline
\rep{$n+r-1$}\newline
\indent \flip{$2$}{$0$}.\newline
\dec \newline
\end{quote}

At a high level the execution of the protocol will work as follows. Each prisoner one at a time becomes \emph{active} (when they are between the SEE command and OSCILLATE command in their execution). The active prisoner will turns all the $0$s to $1$s, then all the $1$s to $2$s then all the $2$s back to $0$s. Meanwhile the other prisoners will resist this change by flipping rooms in the opposite direction once per prisoner per step. There are two things worth noting about this. Firstly, it guarantees that the active prisoner visits every room because the number of rooms flipped from $0$ to $1$ (or from $1$ to $2$ or from $2$ to $0$) is equal to the number of rooms plus the number of other prisoners flipping them in the other directions. Secondly, the other prisoners' resistance allows them to keep track of where in the algorithm they are. To see this, note that while the active prisoner is reconfiguring $0$s to $1$s, another prisoner might reconfigure a $1$ back to a $0$, but they will not be able to execute their next command (flipping a $2$ to a $1$) until the active prisoner moves to their next phase (flipping $1$s to $2$s).

The one difficulty with this idea is how we switch from one active prisoner to the next. The issue is that the new active prisoner needs to wait until the previous one is finished turning $2$'s into $0$ before they start turning $0$s into $1$s. We note that if given access to a fourth configuration, $\upc$, we could have the previous active prisoner reconfigure a room to $\upc$ when they are done, signaling to the new one that they are ready. This would give an algorithm similar to the one-prisoner at a time algorithm. Otherwise, a simple way to signal that they are ready is to flip a room into the $1$ configuration. This would work except that the other prisoners are reconfiguring $1$s to $0$s and might destroy the signal before the new active prisoner sees it. This could be fixed if the old active prisoner reconfigured $r-2$ rooms from $0$ to $1$, however, this introduces a new problem. In particular, with some other prisoners reconfiguring $1$s to $0$s and some reconfiguring $0$s to $1$s, the active prisoner will not be able to tell whether or not they are all finished, since if one $0$ to $1$ was skipped and one $1$ to $0$ was skipped, there would be no way to know. In order to fix this, we want to instead guarantee that the old active prisoner on net turns more $1$s to $0$s than $0$s to $1$s. However, he cannot do this immediately as he might simply flip many $0$s to $1$s and then flip them back without the new active prisoner seeing the signal. We fix this with the oscillate command. This ensures that the old active prisoner keeps reconfiguring rooms back and forth between $0$ and $1$ until somebody (who must in this case be the new active prisoner) starts configuring $0$s to $1$s.

To make things rigorous, we introduce some notation. A prisoner executing their SEE or OSCILLATE commands is called \emph{transitioning} a prisoner between those commands in their execution is called \emph{active}. If they are in their repeat loop, they are in the $0$- $1$- or $2$-phase as noted above. We make the following claims about the execution of algorithm:
\begin{enumerate}
\item There is never more than one active prisoner at a time.
\item The prisoners become active in order.
\item During a $0$-phase, or while there is no active prisoner all rooms are in configuration $0$ or $1$.
\item During a $1$-phase all rooms are in configuration $1$ or $2$.
\item During a $2$-phase all rooms are in configuration $2$ or $0$.
\item During another prisoner's $0$-phase, each other prisoner is on one of their \flip{$1$}{$0$},\flip{$2$}{$1$}, or \osc{$1$}{$0$}{$1$} commands.
\item During another prisoner's $1$-phase, each other prisoner is on one of their \flip{$2$}{$1$} or \flip{$0$}{$2$} commands.
\item During another prisoner's $2$-phase, each other prisoner is on one of their \flip{$0$}{$2$} or \flip{$1$}{$0$} commands.
\item At the start of the protocol and when a prisoner first switches from active to transitioning, all rooms are in the $0$-configuration and all other prisoners are executing a \flip{$1$}{$0$} or \seeC{$1$} command, with at most one prisoner in the latter state.
\item At the start of a $1$-phase, all rooms are in the $1$-configuration, and all non-active prisoners are executing a \flip{$2$}{$1$} command.
\item At the start of a $2$-phase, all rooms are in the $2$-configuration, and all non-active prisoners are executing a \flip{$0$}{$2$} command.
\end{enumerate}

To show that these continue to hold, we assume that they do at the end of a given phase, and show that they still do at the end of the next phase. The analysis for a $1$-phase is easy. At the start of a $1$-phase all rooms are in the $1$ configuration and all non-active prisoners are executing \flip{$2$}{$1$} and the active prisoner is executing their repeat loop of \flip{$1$}{$2$}. It is clear that no non-active prisoners will be able to execute their next command (\flip{$0$}{$2$}) until the active prisoner has moved on to the next phase. The active prisoner will not be able to do this until they have flipped $n+r-1$ rooms from $1$ to $2$. As there are only $r$ rooms available, they cannot do this unless a total of $n-1$ rooms (with multiplicity) are flipped from $2$ back to $1$. This can only happen if each other prisoner completes their \flip{$2$}{$1$} command. At the end of this, all rooms will have been changed to state $2$ and all other prisoners will be on their \flip{$0$}{$2$} commands showing that the state at the start of the next phase is as desired. We also note that this implies that the active prisoner visits every room before the end of this phase.

The analysis for phase-$2$ is similar. The one difference is that we note that exactly one prisoner ends with a \seeC{$1$} command rather than a \flip{$1$}{$0$} command. This is because each pre-transitioning prisoner executes exactly one command per phase. Therefore if the $k^{th}$ prisoner just finished being active, exactly the $(k+1)^{st}$ prisoner is on their \seeC{$1$} command.

The analysis for phase-$0$ (actually starting from where the previous active prisoner switched to being transitioning) is slightly more complicated. One prisoner started at their \seeC{$1$} command, we will call them active, although technically they are transitioning until they see a room in configuration $1$. We note that after seeing this room, they will try to flip $n+r-1$ rooms from $0$ to $1$ before moving on to the next phase. Meanwhile, the other prisoners are either executing \flip{$1$}{$0$} or \osc{$1$}{$0$}{$1$}. We note that after that command, the other prisoner will try to execute \flip{$2$}{$1$}, but will be unable to as no room will be in the $2$ configuration until the next phase. The active prisoner needs to flip $n+r-1$ rooms from $0$ to $1$. There are a total of $r$ rooms, initially in the $0$-configuration. It will be possible to reconfigure rooms into the $1$ configuration $n+r-1$ times only if the other prisoners in aggregate reconfigure rooms from $1$ to $0$ at least $n-1$ times more often than they reconfigure rooms from $0$ to $1$. We note that each other prisoner may do so on net at most $1$ time. Therefore, we can only transition to the next phase once all non-active prisoners have executed their \flip{$1$}{$0$} or \osc{$1$}{$0$}{$1$} commands (but not their next \flip{$2$}{$1$} command) and all rooms have been reconfigured to $1$.

From the above analysis, we note that each non-active prisoner executes exactly one command per phase, and that each active prisoner visits all rooms before becoming non-active again. From this it is easy to see that if the $n^{th}$ prisoner declares, that every prisoner must have been active at some point, and therefore every prisoner must have visited every room at least once.

We have left to show that this happens with probability $1$. For this we will show that from any reachable state, there is always some continuation that causes the protocol to progress to the next phase. For example, starting at the beginning of a $1$-phase, at any point until the next phase, the number of rooms in the $1$ configuration plus the number of non-active prisoners on their \flip{$2$}{$1$} commands is always the number of iterations left on the active prisoner's loop. This means that there is always either a $2$-room to visit for one of the non-active prisoners still on their \flip{$2$}{$1$} command or a room in the $1$-configuration for the active -prisoner to visit. Therefore, if each prisoner visits each room infinitely often, eventually the non-active prisoners will complete their \flip{$2$}{$1$} commands and the leader will complete their loop and the phase will end.

The analysis for the $2$-phase is analogous. The analysis starting after the end of the $2$-phase is slightly more complicated. Firstly, we show that there is always a way for the next active prisoner to execute their \seeC{$1$} command. This is because until they do, the previously active prisoner will be executing their \osc{$1$}{$0$}{$1$} command. This in turn is because they cannot end until they have reconfigured more rooms from $1$ to $0$ than from $0$ to $1$, but until the next prisoner becomes active, no other prisoner is reconfiguring $0$ to $1$. While the previous active prisoner is oscillating, there will always be the possibility that they reconfigure a room to $1$, which is then seen by the next active prisoner. Once this has happened, we claim that (until the end of the phase) it will always be possible for the active prisoner to reconfigure a $0$ to a $1$ or a non-active prisoner to complete their current command. Since these can only happen a bounded number of times during the phase, there will always be a way to proceed to the next phase. To show this, if the oscillate command has not completed, it will always be possible for that prisoner to reconfigure some room to a $0$ so that the active prisoner can later reconfigure it to a $1$. If the oscillate command has completed, it is easy to see that the number of rooms in the $0$ configuration plus the number of prisoners who have not completed their \flip{$1$}{$0$} command is the number of remaining iterations in the active prisoner's repeat loop. From this it is easy to see that there is always either a room in the $0$ configuration for the active prisoner to flip to $1$, or a room in the $1$ configuration for some non-active prisoner (still on their \flip{$1$}{$0$} command) to reconfigure to $0$. This shows that it is always possible to make progress, completing our argument.

\section{A Probability-$\epsilon$ Solution With $2$ Configurations}

In the last section, we found a probability-$1$ algorithm for three configurations.  Unfortunately our impossibility proof for $2$ configurations does not generalize to algorithms merely working with probability $1$.  Although we do not have a two-configuration protocol that succeeds with probability $1$, we do have an algorithm that satisfies another interesting condition.

We define an algorithm to \emph{win with probability $\epsilon$} if the prisoners never declare incorrectly and do declare in some sequence of visits.  The difference between probability $1$ and probability $\epsilon$ is that in the latter case it may be possible to become stuck.  Essentially, having a probability $\epsilon$ algorithm means that you have a way of proving that everyone has been to every room.  Given the two room configurations, $0$ and $1$, with $0$ the starting configuration, we produce the following algorithm, where the prisoners are $p_0,p_1,p_2,\ldots,p_{n-1}$ (intuitively in the execution that causes them to win, the prisoners act in order $p_0$ first then $p_1$ and so on):

\begin{quote}
\tit{$p_k$'s $(k\neq n-1)$ Algorithm:}\newline
\rep{$r+k$}\com{Startup phase}\newline
\indent \flip{$0$}{$1$} \com{Started after first command executed} \newline
\rep{$r$} \com{Check phase} \newline
\indent \flip{$1$}{$0$}\newline
\rep{$r$}\newline
\indent \flip{$0$}{$1$} \newline
\rep{$r+k+1$}\com{Cooldown phase} \newline
\indent \flip{$1$}{$0$} \com{Finished} \newline
\end{quote}

\begin{quote}
\tit{$p_{n-1}$'s Algorithm:}\newline
\rep{$r+n-1$}\newline
\indent \flip{$0$}{$1$} \newline
\rep{$r$}\newline
\indent \flip{$1$}{$0$} \newline
\rep{$r$}\newline
\indent \flip{$0$}{$1$} \newline
\dec
\end{quote}

We first define a few phases of the algorithm.  Once a prisoner has executed his first flip command we declare him to have \emph{started}.  While in the first REPEAT loop, we say that a prisoner is in the \emph{startup phase}.  During the next two REPEAT loops, we say that prisoner is in the \emph{check phase}.  While in the last loop, we say he is in the \emph{cooldown phase}---and when he finishes it, they are \emph{finished}.

Firstly, we note that there is some sequence of visits that cause the prisoners to declare. The required visit sequence is as follows. Firstly, $p_0$ visits each room in sequence (changing them all to $1$s and finishing their startup phase), then visits them all again (changing the rooms back to $0$ and finishing their second repeat loop), visiting all rooms a third time (finishing their check phase, and setting them to $1$), and then visiting a fourth time (setting all rooms to $0$ and finishing all but the last step in their cooldown phase). Then $p_1$ visits a room $R$ followed by $p_0$ visiting $R$. Prisoner $p_1$ flips $R$ to configuration $1$ and back with $p_0$ finishing their cooldown phase and $p_1$ executing the first command in their startup phase. Then $p_1$ visits each room in order four times. As before, this leaves all rooms in the $0$ configuration with $p_1$ having finished all but the last two steps of their cooldown phase. Next, we have $p_2$ and $p_1$ alternate visits to room $R$ twice. This finishes $p_1$'s cooldown and the first two steps of $p_2$'s startup. We continue in this manner.

In general, we will reach a state where all rooms are in the $0$ configuration, $p_0,\ldots,p_{k-1}$ have finished, $p_k$ has completed all but the last $k+1$ steps of their cooldown phase, and none of $p_{k+1},\ldots,p_n$ have started. We then have $p_k$ and $p_{k+1}$ alternate visits to room $R$ a total of $k+1$ times. This causes $p_k$ to finish their cooldown and for $p_{k+1}$ to complete the first $k+1$ steps of their startup. We then have $p_{k+1}$ visit all rooms in order four times, leaving them all in configuration $0$, with $p_{k+1}$ having completed all but the last $k+2$ steps of their cooldown (or declaring if $k+1=n-1$). This leaves us in the same situation as we started with but for $k+1$. Continuing in this manner, we will reach it for $p_{k+2},p_{k+3},\ldots,p_{n-1}$, at which point we will declare.

We now need to verify the more difficult assertion that this strategy only declares after all prisoners have visited all rooms.  Most of our argument will be based on one simple fact.  At any time during the execution of this algorithm, the difference in the total number of \flip{$0$}{$1$} commands executed by all prisoners combined and the number of \flip{$1$}{$0$} commands executed by all prisoners combined is between $0$ and $r$ inclusive (since this difference is the number of rooms currently in configuration $1$).  For each prisoner we define their \emph{imbalance} to be the difference in the number of \flip{$0$}{$1$}'s they have executed and the number of \flip{$1$}{$0$}'s they have executed.  Hence the sum of all prisoner's imbalance is between $0$ and $r$ inclusive.  We note the following easily verified facts about prisoners' imbalances:
\begin{enumerate}
\item After a prisoner starts, their imbalance is positive until their cooldown phase.
\item A prisoner's imbalance is non-negative until they are finished when it becomes -1.
\item At the end of a prisoner's startup phase and at the end of their check phase, their imbalance is $r+k$.
\end{enumerate}

Note that (2) and (3) above imply that $p_k$ cannot finish his startup phase until at least $k$ other prisoner's have finished.  Since a prisoner must end their startup phase before finishing, $p_0$ is the only prisoner than can finish before any other.  Similarly, $p_1$ is the only prisoner that can finish after only $p_0$ has.  Continuing with this logic, we conclude that if prisoners finish at all they must do so in the order $p_0,p_1,p_2,\ldots$. Additionally, $p_k$ cannot even finish their startup phase until $p_0,\ldots,p_{k-1}$ have finished. Note therefore, that when $p_k$ ends his startup phase or ends his check phase, only $p_0,\ldots,p_{k-1}$ can have finished and that all other prisoners must have non-negative imbalance. Since $p_k$ has imbalance $r+k$ and the total imbalance is at most $r$, this means that it must be the case that $p_0,\ldots,p_{k-1}$ have imbalance $-1$ (which implies that they have finished) and that $p_{k+1},\ldots,p_{n-1}$ must have imbalance $0$ (which means that they haven't started). This means that no other prisoners reconfigure any rooms during $p_k$'s check phase. In particular, it means that during the first loop of $p_k$'s check phase they must flip every room from $1$ to $0$ (and thus must visit every room). Therefore, every finished prisoner must have visited every room. Furthermore, when the prisoners declare, $p_{n-1}$ is finished. This implies that $p_0,\ldots,p_{n-1}$ must all have finished. And thus, every prisoner must have visited every room.

\begin{remark} We note that although the prisoners never declare incorrectly here, it is very easy for them to get stuck. The above proof shows that in order for them to declare, it must be the case that $p_k$ doesn't start until all of $p_0,\ldots,p_{k-2}$ have finished. Of course if $p_k$ (for some $k\geq 1$) is the first prisoner to visit any room, this will never happen. In particular, all of the rooms will be reconfigured to the $1$ configuration before any prisoner has finished their startup phase, and there will be no way to make further progress.
\end{remark}

\section{Corner Cases and Related Problems}

We close by discussing a number of special cases, in which sharper results can be obtained, along with some variants on our problem.

\subsection{$2$ Rooms, $3$ Configurations}

We have a three-configuration solution for the special case of $r=2$.   For this solution, we call our configurations $\upc,\on$, and $\off$---with $\off$ representing the initial configuration.  We have a single leader, whose algorithm is as follows.

\begin{quote}
\tit{Leader's Algorithm:}\newline
\flip{$\off$}{$\upc$}\newline
\rep{$n-1$}\newline
\indent \flip{$\on$}{$\off$}\newline
\dec\newline
\end{quote}

All other prisoners use the following algorithm.

\begin{quote}
\tit{Other Prisoners' Algorithm:}\newline
\seeC{$\upc$}\newline
\flip{$\off$}{$\on$}\newline
\end{quote}

Essentially, after the leader produces a single room in the $\upc$ state, the prisoners execute the standard one-room protocol in the other room, with the proviso that they do nothing until they have seen the room in the $\upc$ state.
\begin{remark} This idea also provides us with a somewhat silly protocol for $r=3$ with four configurations.\end{remark}

\subsection{Small $n$}

In Section \ref{naive solutions section}, we described a solution using $n+1$ configurations in which prisoner $k$ would change all the rooms from configuration $k-1$ to configuration $k$.  Although this is somewhat inefficient for $n\geq 3$, it provides new solutions when $n=1 $ or $n= 2$.

\subsection{Unknown Starting Configuration with Infinitely Many Room Configurations}

We note that the proof of Theorem \ref{unknownStartThrm} actually requires that each room has only a finite number of possible configurations---as it happens, this is actually necessary. In particular, as we show now, if there are infinitely many configurations, then there \emph{is} a strategy that works for arbitrary starting configurations.

For simplicity, we assume that there are countably infinitely many configurations and that these configurations correspond to finite-length alphanumeric strings, thought of as writing on the walls of the room. The prisoners' strategy here is actually fairly simple. Upon entering a room, each prisoner appends to that room's transcript their name followed by the number of rooms that prisoner has visited so far. We claim that with this simple strategy, eventually some prisoner will have enough information to be able to conclude that each prisoner has visited every room.

To start the analysis, we note first that eventually all $r$ rooms will become distinguishable from each other. In particular, if a prisoner ever sees $r$ rooms where for no pair of these rooms is the transcript of one a prefix of the transcript of the other, these rooms must all be distinct (as the transcript of a room can only be modified by appending new text). To show this, we consider some particular prisoner, Barry. Upon visiting his $k$-{th} room, he will append ``Barry$k$'' to the transcript of that room. Now some rooms may have strings of the form ``Barry$m$'' in their initial transcript, but since these initial transcripts are finite there is a maximum such value of $m$ that ever appears. Call $M$ the largest such value of $m$. Once Barry makes his $k$-{th} visit to any room for any $k>M$, his text ``Barry$k$'' in that room (followed up by another prisoner's name rather than more digits for the number) will never appear in any other room. Once he has made such visits to all $r$ rooms, the rooms will thereafter be distinguishable from each other.

So eventually, a prisoner will see rooms with transcripts $T_1,T_2,\ldots,T_r$ none of which is a prefix of any other. At some later point, this prisoner will see rooms with transcript $T_i$ followed by some list of names and numbers that include the names of every prisoner. At that point, it must be the case that every prisoner has visited the room that was in configuration $T_i$. Once any prisoner has seen this occur for all $i$ with $1\leq i \leq r$, they can safely declare.

\subsection{Symmetric Strategy}

One interesting modification to our problem would be the additional requirement that the prisoners use identical strategies.  Essentially none of our protocols satisfy this property.  In general this problem seems to be much harder (although if the prisoners are allowed to specify a starting configuration, they can start with a single room in a special configuration and structure the protocol so that the first person to see that configuration becomes designated leader).

\subsection{Repeated Entries}

A substantially easier modification to the rules  requires that each prisoner visit each room $\ell\geq 1$ times before the prisoners declare.  This problem is not significantly more difficult than the original, as several of our algorithms can be easily modified to accommodate it.  For example, our original two-switch solution can be modified so that each prisoner flips all rooms on and then all rooms off $\ell$ times.  Essentially all of the protocols presented in this paper have similar modifications.

\subsection{Multiple Declarations}

Another modification of the problem is obtained by requiring that all prisoners declare at some point after they have all visited every room.  This can be done with four-state switches using a slight modification of the protocol given in Section \ref{one room at a time sec}: the leader puts a room in the $\upc$ state to denote that it is time to declare.  In particular, we append
\begin{quote}
\flip{$\done$}{$\upc$}\newline
\end{quote}
to the leader's algorithm, and append
\begin{quote}
\seeC{$\upc$}\newline
\dec\newline
\end{quote}
to all other prisoners' algorithms.

\subsection{Forced Flipping of Switches}

Another modification would be the require that upon eeach visit to a room, a prisoner \emph{must} reconfigure that room's state in some way.  It is not clear that any of our existing protocols generalize to this alternate setting directly, but any protocol for the original problem can be extended to this case by doubling the number of room configurations.  This is done by replacing each configuration by a pair of new configurations, which are treated as equivalent for purposes of the protocol, except that a prisoner can toggle between them if no other configuration change is desired.

Put another way, we can accomplish this by adding an additional switch to each room. This switch will have no effect on the rest of our protocol save that any prisoner visiting a room will always flip that switch in addition to whatever else they were going to do.

\subsection{Limited Reconfiguration}

More generally, the warden could impose essentially arbitrary restrictions on which configurations can be reconfigured into which other configurations in a single visit.  We cannot say much about the problem in this level of generality, and leave it to prisoners craftier than us.

\bibliographystyle{amsplain}
\bibliography{prisoners}
\end{document}